\documentclass{llncs}

\usepackage[dvipsnames]{xcolor}
\usepackage{microtype}
\usepackage{amsmath}
\usepackage{amssymb}
\usepackage{stmaryrd}
\usepackage{cite}
\usepackage{paralist}
\usepackage{color}

\input xy
\xyoption{all}

\usepackage{listings}
\usepackage{color}

\usepackage{caption}
\usepackage{subcaption}
\usepackage{hyperref}

\usepackage{algorithm}
\usepackage{algpseudocode}

\usepackage{algorithmicx}

\usepackage{pgf}
\usepackage{tikz}
\usetikzlibrary{arrows,automata}

\pagestyle{plain}
\bibliographystyle{abbrv}

\newcommand{\ms}[1]{{\bf MS: #1}}

\makeatletter
\newcommand{\xRightarrow}[2][]{\ext@arrow 0359\Rightarrowfill@{#1}{#2}}
\makeatother


\newcommand{\bi}{\begin{array}[t]{@{}l@{}}}
    \newcommand{\ei}{\end{array}}
\newcommand{\ba}{\begin{array}}
    \newcommand{\ea}{\end{array}}
\newcommand{\bda}{\[\ba}
\newcommand{\eda}{\ea\]}
\newcommand{\bp}{\begin{quote}\tt\begin{tabbing}}
        \newcommand{\ep}{\end{tabbing}\end{quote}}

\newcommand{\ignore}[1]{}

\newcommand{\mathem}{\sf}


\newcommand{\rlabel}[1]{\mbox{(#1)}}




\newcommand{\thread}[2]{#1 \sharp #2}



\newcommand{\SSHB}{\mbox{SSHB}}
\newcommand{\SHBPartner}{\mbox{SHBP}}
\newcommand{\HBPartner}{\mbox{HBP}}


\newcommand{\incC}[2]{{\mathem inc}(#1,#2)}

\newcommand{\supVC}[2]{#1 \sqcup #2}

\newcommand{\accVC}[2]{#1[ #2 ]}

\newcommand{\updateVC}[3]{#1[ #2 \mapsto #3 ]}

\newcommand{\pp}{\ \texttt{++}}






\newcommand{\lockE}[1]{\mathit{acq}(#1)} 
\newcommand{\unlockE}[1]{\mathit{rel}(#1)} 
\newcommand{\readE}[1]{r(#1)}
\newcommand{\writeE}[1]{w(#1)}
\newcommand{\accE}[1]{acc(#1)}

\newcommand{\lockET}[1]{\mathit{traceAcq}(#1)} 
\newcommand{\unlockET}[1]{\mathit{traceRel}(#1)} 
\newcommand{\readET}[1]{trace_r(#1)}
\newcommand{\writeET}[1]{trace_w(#1)}

\newcommand{\readEE}[2]{r(#1)_{#2}}
\newcommand{\writeEE}[2]{w(#1)_{#2}}
\newcommand{\lockEE}[2]{acq(#1)_{#2}}  
\newcommand{\unlockEE}[2]{rel(#1)_{#2}} 


\newcommand{\pos}[1]{\mathit{pos}(#1)}
\newcommand{\posP}[2]{\mathit{pos}_{{\scriptstyle #1}}(#2)}
\newcommand{\compTID}[1]{\mathit{thread}(#1)}
\newcommand{\compTIDP}[2]{\mathit{thread}_{{\scriptstyle #1}}(#2)}

\newcommand{\events}[1]{\mathit{events}(#1)}



\newcommand{\shb}[2]{#1 <^{\scriptscriptstyle SHB} #2}
\newcommand{\strongshb}[2]{#1 <^{\scriptscriptstyle \forall SHB} #2}
\newcommand{\someshb}[2]{#1 <^{\scriptscriptstyle \exists SHB} #2}

\newcommand{\hb}[2]{#1 <^{\scriptscriptstyle HB} #2}
\newcommand{\notHB}[2]{#1 \|^{HB} #2}

\newcommand{\strongSHBRel}{$\forall$SHB}
\newcommand{\someSHBRel}{$\exists$SHB}

\newcommand{\starrow}[1]{\stackrel{#1}{\rightarrow}}

\newcommand{\hbSym}{<^{\scriptscriptstyle HB}}
\newcommand{\strongshbSym}{<^{\scriptscriptstyle \forall SHB}}



\newcommand{\writeVC}[1]{\mathit{Write}(#1)}
\newcommand{\lastWriteVC}[1]{\mathit{L_W}(#1)}

\newcommand{\readVC}[1]{\mathit{Read}(#1)}

\newcommand{\threadVC}[1]{\mathit{Th}(#1)}
\newcommand{\lockVC}[1]{\mathit{Rel}(#1)}

\newcommand{\gtEdge}{\rightarrow} 

\newcommand{\threadLS}[1]{\mathit{Th_{LS}}(#1)}

\newcommand{\rwT}[1]{T^{rw}_{#1}}
\newcommand{\rwTx}{\rwT{x}}



\newcommand{\CountRaces}{\#Races}

\title{Data Race Prediction for Inaccurate Traces}

\author{Martin Sulzmann and Kai Stadtm{\"u}ller}
\institute{
  Faculty of Computer Science and Business Information Systems \\
  Karlsruhe University of Applied Sciences \\
  Moltkestrasse 30, 76133 Karlsruhe, Germany\\
  \email{martin.sulzmann@gmail.com} \\
    \email{kai.stadtmueller@live.de} 
}

\begin{document}

\maketitle

\begin{abstract}
  Happens-before based data race prediction methods infer from a trace of events a partial order
  to check if one event happens before another event.
  If two write events are unordered, they are in a race.
  We observe that common tracing methods provide no guarantee that the
  trace order corresponds to an actual program run.
  We refer to this as an inaccurate trace.
  The consequence  of inaccurate tracing is that results (races) reported are inaccurate.
  We introduce diagnostic methods to examine
  if (1) a race is guaranteed to be correct regardless of 
  any potential inaccuracies, (2) maybe is incorrect due to inaccurate tracing.
  We have fully implemented the approach and provide for an empirical comparison
  with state of the art happens-before based race predictors such as FastTrack and SHB.
\end{abstract}

\section{Introduction}

We consider trace-based dynamic verification methods that
attempt to predict a data race based on a single execution run.
We assume that relevant program events such as as write/read and
acquire/release operations are recorded in a program trace.
This trace represents an interleaved execution of the program
and is the basis for further analysis.

If two conflicting events, e.g.~two writes on the same variable,
appear right next to each other in the trace, a data race may arise
in the actual program execution. Due to the interleaved recording of events,
two conflicting events may not appear right next to
each other in the trace. The challenge is to predict if there is a
valid trace reordering under which both events take place one after the other.

A popular method to meet such challenges is Lamport's happens-before model~\cite{lamport1978time}.
The idea is to infer from the recorded sequence of events a partial order order relation
among events. If two events are ordered then one happens before the other event and
therefore there cannot be any conflict among these events.
If two, say write, events are unordered then there is a potential conflict.
Neither event happens before the other event.
Hence, there is an interleaving where both events take place one after the other.
This may lead to a conflict.
There exists a significant line of research that builds upon the idea
of deriving a happens-before relation from the trace of recorded events,
e.g.~see~\cite{DBLP:journals/concurrency/PoznianskyS07,flanagan2010fasttrack,Smaragdakis:2012:SPR:2103621.2103702,Huang:2015:GGP:2818754.2818856,Kini:2017:DRP:3140587.3062374,Roemer:2018:HUS:3296979.3192385,Mathur:2018:HFR:3288538:3276515,Kalhauge:2018:SDP:3288538.3276516}.

These works rely on the assumption that the trace is \emph{accurate}.
By accurate we mean that operations
can be executed in the same order as their corresponding events appear in the trace.
To derive the sequence of events in a form of a trace, operations need to be instrumented.
In case of a write (read) operation, we record an event to represent
the fact that a write (read) on some shared variable took place.
The same applies to mutex operations such as acquire and release.

The issue is that storing of events is largely unsynchronized to ensure a low run-time overhead.
Hence, the sequence of events recorded in a trace may be \emph{inaccurate} and may not reflect
an actual program run. For happens-before based analyses this means that
release-acquire and write-read happens-before dependencies
derived from the trace are inaccurate. Then,
a predicted race may not be feasible (false positive) and an obvious race may be missed (false negative).
These are scenarios that potentially appear in practice.

Concretely, we consider the happens-before based data race predictors
FastTrack~\cite{flanagan2010fasttrack} and SHB~\cite{Mathur:2018:HFR:3288538:3276515}.
FastTrack is only sound for the first race reported.
FastTrack ignores write-read dependencies and therefore subsequent races reported may be false positives.
SHB on the other hand includes write-read dependencies and comes with the guarantee
that no false positives arise.
This guarantee only holds under the assumption that the trace is accurate.
However, common tracing tools such as
RoadRunner~\cite{flanagan2010roadrunner} may produce inaccurate traces.

Our idea is to provide diagnostic methods to examine the data races reported by
happens-before based race predictors.
We provide methods to check if a race is (1) \emph{guaranteed} to be correct regardless of 
any potential inaccuracies during tracing, (2) \emph{maybe} incorrect due to inaccurate tracing.
Applied to FastTrack and SHB, we can provide the following diagnostic information.
If a FastTrack/SHB race is a guaranteed race we can be certain that
(inaccurate) write-read dependencies have no impact on the analysis result.
If a SHB race is a maybe race, then tracing could be inaccurate.
If a FastTrack race is a maybe race, then this could be a false positive

In summary, our contributions are:
\begin{itemize}
\item We introduce an instrumentation scheme that guarantees accurate tracing
  of release-acquire dependencies (Section~\ref{sec:run-time-events}).
  
\item We develop a family of happens-before relations that take into account
  inaccurately traced write-read dependencies (Section~\ref{sec:happens-before-relation}).

\item We propose diagnostic methods to examine if data races reported are
  affected by inaccurately traced write-read dependencies (Section~\ref{sec:diagnosis}).

\item We have fully implemented these diagnostic methods and provide for an empirical comparison
      with the results provided by FastTrack and SHB (Section~\ref{sec:experiments}).

\end{itemize}

The upcoming section explains by example the issue of inaccurate tracing
and its impact on happens-before based data race prediction methods.
We also motivate our method how to check if the predicted race is affected by inaccurate tracing.
Section~\ref{sec:run-time-events} reviews instrumentation and tracing.
Section~\ref{sec:related-works} covers related work.
Section~\ref{sec:conclusion} concludes.

Our implementation is available via
\begin{center}
  \url{https://github.com/KaiSta/gopherlyzer-GuaranteedRaces}.
\end{center}

The  paper contains an appendix
with further details such as proofs of the results stated
in this paper.

\section{Overview}
\label{sec:overview}

We identify two scenarios where inaccurate tracing
affects the precision of the happens-before relation derived from the trace.

\subsection{Inaccurate Tracing of Release-Acquire Dependencies}

\begin{figure}[tp]
  \centering
\begin{subfigure}{.3\textwidth}
  \centering
\begin{verbatim}
  spawn { // thread 1
   acquire(y);
   x = 1;
   release(y);    
  }
  acquire(y);
  x = 2;
  release(y);
\end{verbatim}
  \caption{Program}
  \label{subfig:program}
\end{subfigure}  
\begin{subfigure}{.3\textwidth}
  \centering
  \bda{ll|l}
   & \thread{1}{}
   & \thread{2}{}
  \\ \hline
  1. & \lockE{y} &
  \\ 2. & \writeE{x} & 
  \\ 3. & \unlockE{y}
  \\ 4. & & \lockE{y}
  \\ 5. & & \writeE{x}
  \\ 6. & & \unlockE{y}
  \eda
  \caption{Actual Trace}
  \label{subfig:actual-trace}
\end{subfigure}  
\begin{subfigure}{.3\textwidth}
  \centering
    \bda{ll|l|l}
   & \thread{1}{}
   & \thread{2}{}
  \\ \hline
  1. & \lockE{y} &
  \\ 2. & \writeE{x} &
  \\ 3. & & \lockE{y}
  \\ 4. & & \writeE{x}
  \\ 5. & \unlockE{y}
  \\ 6. & & \unlockE{y}
  \eda
  \caption{Inaccurate Trace}
    \label{fig:example-lock-false-positive-inaccurate-trace}
\end{subfigure}  
  \caption{Inaccurate Tracing of Release-Acquire: False Positive}
  \label{fig:example-lock-false-positive}
\end{figure}

We consider the program in Figure~\ref{fig:example-lock-false-positive} (on the left).
We assume a mutex referenced by variable~$y$ and some shared variable~$x$.
In the main thread, referred to as thread~2, we acquire the mutex,
write to $x$, and then release the mutex.
The main thread creates another thread, referred to as thread~1,
which effectively performs the same sequence of operations.
We assume that the program is instrumented such that the execution of the program
yields a trace. A trace is a sequence of events that took place and
is meant to represent the interleaved execution of the various threads found in the program.
The events we are interested in here are acquire, release, read and write.

Subfigure~\ref{subfig:actual-trace} shows such a trace.
We use a tabular notation to record for each event, the kind of event,
its position in the trace and the thread id of the thread in which the event took place.
For example, an acquire event on $y$ in thread~1 is recorded at trace position~1.
We use notation $\thread{1}{\lockEE{y}{1}}$ to represent this information.
We write $\thread{2}{\writeEE{x}{5}}$ to denote the write event on $x$ in thread~2 recorded
at trace position~5.
We often skip the thread identifier and write $\writeEE{x}{5}$ for short.

A well-explored data race prediction approach is to infer from the trace
a happens-before relation. If two conflicting events, e.g.~two writes on the same variable,
are not in a happens-before (HB) relation we argue that the trace can be reordered
such that both events appear next to each other in the trace.
This shows that there is a potential data race.

For the trace in Subfigure~\ref{subfig:actual-trace} we infer that
(a) $\writeEE{x}{2} < \unlockEE{y}{3}$,
(b) $\unlockEE{y}{3} < \lockEE{y}{4}$, and
(c) $\lockEE{y}{4} < \writeEE{x}{5}$.
HB relations (a) and (c) are due to the program order.
That is, for each thread events take place in the order as they appear in the program text.
HB relation (b) is due to a release-acquire dependency.
An acquire can only be performed once the mutex is released. Hence, we impose (b).
Based on the above HB order, we find that $\writeEE{x}{2} < \writeEE{x}{5}$.
We conclude that the two writes are not in a race.

The issue we investigate in this work is that tracing may be inaccurate.
The program text is instrumented to record an event
for each operation we are interested in.
To ensure a low run-time overhead,
there is almost no synchronization among concurrently taking place recordings.
This may lead to some inaccurate trace.

The example in Figure~\ref{fig:example-lock-false-positive}
has two threads of execution.
The release operation in thread~1 takes place before the acquire
and write operation in thread~2.
Assuming that all events are recorded after the respective operation took place,
it is entirely possible that the recording of the release
event will be overtaken by the acquire and write event.
This then results in the (inaccurate) 
trace reported in Subfigure~\ref{fig:example-lock-false-positive-inaccurate-trace}.

Like above, we find the program order relations (a) and (c).
However, the release-acquire dependency is lost.
In the recorded trace in Subfigure~\ref{fig:example-lock-false-positive-inaccurate-trace},
the release in thread~1 does not precede the acquire in thread~2.
The consequence is that the two writes appear unordered and the analysis (falsely)
signals that there is a potential race.

The above scenario may arise in practice. The RoadRunner tracing tool~\cite{flanagan2010roadrunner}
possibly yields the inaccurate trace in Subfigure~\ref{fig:example-lock-false-positive-inaccurate-trace}.
A happens-before based data race predictor
such as FastTrack~\cite{flanagan2010fasttrack} will then (wrongly) signal a write-write race.
This is clearly a false positive as for any execution run the writes are mutually exclusive.

\subsection{Inaccurate Tracing of Write-Read Dependencies}

Another source of inaccurate tracing are unsynchronized reads and writes.
Write-read dependencies derived from the trace may not be accurate
and we may then encounter false positives and negatives.

\begin{figure}[tp]
  \centering
\begin{subfigure}{.3\textwidth}
  \centering
\begin{verbatim}
  int x = 0; int y = 1;
  spawn { // thread 1
    y = 1; x = 1; 
  }
  if (x == 1)
          y = 2;
\end{verbatim}
  \caption{Program}
\end{subfigure}  
\begin{subfigure}{.3\textwidth}
  \centering
  \bda{ll|l}
   & \thread{1}{}
   & \thread{2}{}
  \\ \hline
  1. & \writeE{y} &
  \\ 2. & \writeE{x} &
  \\ 3. & & \readE{x}
  \\ 4. & & \writeE{y}
  \eda
  \caption{Actual Trace}
  \label{fig:example-wrd-false-positive-actual}
\end{subfigure}  
\begin{subfigure}{.3\textwidth}
  \centering
    \bda{ll|l}
   & \thread{1}{}
   & \thread{2}{}
  \\ \hline
  1. & & \readE{x}
  \\ 2. & \writeE{y} &
  \\ 3. & \writeE{x} &
  \\ 4. & & \writeE{y}
  \eda
  \caption{Inaccurate Trace}
  \label{fig:example-wrd-false-positive-inaccurate}
\end{subfigure}  
  \caption{Inaccurate Tracing of Write-Read Dependencies: False Positive}
  \label{fig:example-wrd-false-positive}
\end{figure}

Consider the program in Figure~\ref{fig:example-wrd-false-positive} (on the left).
We find two threads where the main thread is referred to as thread~2.
We assume shared variables $x$ and $y$. We include their declaration and initial values (zero).
This did not matter for our earlier example.
We run the program where we assume that thread~1 executes first.
The resulting trace is shown in Subfigure~\ref{fig:example-wrd-false-positive-actual}.
For brevity, the trace does not contain the initial writes to \texttt{x} and \texttt{y}.

For the trace in Subfigure~\ref{fig:example-wrd-false-positive-actual} we infer that
(a) $\writeEE{y}{1} < \writeEE{x}{2}$,
(b) $\readEE{x}{3} < \writeEE{y}{4}$, and
(c) $\writeEE{x}{2} < \readEE{x}{3}$.
HB relations (a) and (b) are due to the program order. 
HB relation (c) is due to a write-read dependency.
As the control flow may be affected by the read value,
we assume that the read event happens after the preceding write.
Hence, we impose (c). Based on the above HB order we then find that $\writeEE{y}{1} < \writeEE{y}{4}$.
We conclude that the two writes on \texttt{y} are not in a race.

We take a closer look at the recording of events when running the program.
After each read/write operation, we record a read/write event.
We consider a specific execution run where
thread~1 executes first followed by the if-statement in the main thread.
In detail, after execution of \texttt{y = 1} and \texttt{x = 1} in thread~1,
we execute the if-condition \texttt{x == 1} followed by \texttt{y = 2} in the main thread (2).
For the resulting events we observe the following.

Thread~1 records the events $\writeE{y}$ and $\writeE{x}$
and the main thread records the events $\readE{x}$ and $\writeE{y}$.
For each thread, events are recorded in the sequence as they arise.
For example, for thread~1, we find that $\writeE{y}$ appears before $\writeE{x}$
in the trace. Recording of concurrent events is not synchronized.
Hence, it is possible that the recording of $\readE{x}$ takes place
before the recording of $\writeE{x}$.
Then, we encounter the trace in Subfigure~\ref{fig:example-wrd-false-positive-inaccurate}.

This trace implies the same program order relations (a) and (b) as
discovered for the trace in Subfigure~\ref{fig:example-wrd-false-positive-actual}.
However, the write-read dependency (c) is lost because $\writeEE{x}{2}$
no longer precedes $\readEE{x}{3}$.
Then, the two write events on $y$ are unordered
and the analysis (falsely) signals that there is a potential race.

The above scenario may arise in practice.
A happens-before based data race predictor such as SHB~\cite{Mathur:2018:HFR:3288538:3276515}
that imposes write-read dependencies then reports a race where where there is actually none.
The initial value of $x$ is zero. The then-statement will only be reached once thread~1 is fully executed.
Hence, the two writes on $y$ always happen one after the other. Hence, there is no write-write race on $y$.

\begin{figure}[tp]
  \centering
\begin{subfigure}{.3\textwidth}
  \centering
\begin{verbatim}
  spawn { // thread 1
    y = 1; x = 1; 
  }
  spawn { // thread 2
    if (x == 2)
          y = 2;
  }
  x = 2;
\end{verbatim}
  \caption{Program}
\end{subfigure}  
\begin{subfigure}{.3\textwidth}
  \centering
  \bda{ll|l|l}
   & \thread{1}{}
   & \thread{2}{}
   & \thread{3}{}
  \\ \hline
  1. & & & \writeE{x}
  \\ 2. & & \readE{x} &
  \\ 3. & & \writeE{y} &
  \\ 4. & \writeE{y} & &
  \\ 5. & \writeE{x} & &
  \eda
  \caption{Actual Trace}
    \label{fig:example-wrd-false-negative-actual-trace}  
\end{subfigure}  
\begin{subfigure}{.3\textwidth}
  \centering
    \bda{ll|l|l}
   & \thread{1}{}
   & \thread{2}{}
   & \thread{3}{}
  \\ \hline
  1. & \writeE{y} & &
  \\ 2. &  \writeE{x} & &
  \\ 3. & & \readE{x} &
  \\ 4. & & \writeE{y} &
  \\ 5. & & & \writeE{x}
  \eda
  \caption{Inaccurate Trace}
    \label{fig:example-wrd-false-negative-inaccurate-trace}
\end{subfigure}  
  \caption{Inaccurate Tracing of Write-Read Dependencies: False Negative}
  \label{fig:example-wrd-false-negative}
\end{figure}

Figure~\ref{fig:example-wrd-false-negative} shows another example.
For this case, inaccurate tracing of write-read dependencies omits a race.
For the trace in Subfigure~\ref{fig:example-wrd-false-negative-inaccurate-trace} we infer that
(a) $\writeEE{y}{1} < \writeEE{x}{2}$,
(b) $\writeEE{x}{2} < \readEE{x}{3}$,
(c) $\readEE{x}{3} < \writeEE{y}{4}$.
HB relations (a) and (c) are due to the program order.
Relation (b) is due to a write-read dependency.
We conclude that $\writeEE{y}{1}$ happens-before $\writeEE{y}{4}$.
Hence, there is no race.
This reasoning is wrong due to inaccurately traced events.
Based on the actual (accurate) trace in Subfigure~\ref{fig:example-wrd-false-negative-actual-trace},
we find that the two writes on $y$ are in a race.

\subsection{Our Approach}

Inaccurate tracing of release-acquire and write-read dependencies
affects the precision of happens-before based data race predictors.
 See the above examples.
In case of release-acquire, we will propose an alternative (yet simple) instrumentation scheme
that by construction guarantees that all release/acquire dependencies are accurately traced.
No such guarantees can be given for write-read dependencies unless some heavy-weight instrumentation
scheme is used that severely impacts the run-time performance.
Hence, we have to face the possibility that write-read dependencies
derived from the trace are not accurate.

\begin{figure}[tp]
  \centering

\begin{subfigure}{.4\textwidth}
  \centering
    \bda{ll|l}
   & \thread{1}{}
   & \thread{2}{}
  \\ \hline
  1. & & \readE{x}
  \\ 2. & \writeE{y} &
  \\ 3. & \writeE{x} &
  \\ 4. & & \writeE{y}
  \eda
  \caption{Trace}
\end{subfigure}  
\begin{subfigure}{.4\textwidth}
  \centering

\begin{tikzpicture}[->,>=stealth',shorten >=1pt,auto,node distance=1.5cm,semithick]
            \node[draw=black] (A) {$\readEE{x}{1}$};
            \node[draw=black] (B) [below left of=A] {$\writeEE{y}{2}$};
            \node[draw=black] (C) [below of=B, yshift=.5cm] {$\writeEE{x}{3}$};
            \node[draw=black] (D) [below right of=C] {$\writeEE{y}{4}$};
            
            \path (B) edge node {} (C);
            \path (C.east) edge [dashed] node  {} (A);            
            \path (A) edge node {} (D);
  \end{tikzpicture}
 \caption{Edges}
\end{subfigure}
\caption{Data Race Diagnosis for Subfigure~\ref{fig:example-wrd-false-positive-inaccurate}}
\label{fig:diagnosis-a}
\end{figure}

Our approach to deal with inaccurate write-read dependencies is as follows.
The standard write-read dependency relation
imposes a happens-before relation from the \emph{nearest write} in the trace to the read.
We take into account inaccurate tracing by considering
\emph{all potential write} dependency candidates.
A write is a potential write-read dependency candidate for a read
if (1) the read and write are unsynchronized, or
(2) the write immediately happens before the read with no other write in between.

We then build a graph to examine pairs of events that are in a race.
For each standard happens-before relation we add an edge.
In addition, we add an edge from a write to a read if the write is a potential
write-read dependency candidate for that read.
Thus, we take into account inaccuracies that we might experience during
tracing of unsynchronized reads/writes.
For each data race pair we then test if there is a path that connects both events.
If there is a path, then we classify the pair as a \emph{maybe} data race.
If there is no path, then we classify the pair is a \emph{guaranteed} data race.

We revisit our earlier examples.
Figure~\ref{fig:diagnosis-a} shows the resulting edges (on the right)
for the (inaccurate) trace in Subfigure~\ref{fig:example-wrd-false-positive-inaccurate}.
Solid edges correspond to standard happens-before relations.
Dotted edges correspond to potential write-read dependencies.
There is only one write candidate for the read here.
The standard happens-before analysis reports for the pair $(\writeEE{y}{2}, \writeEE{y}{4})$ that the events are unordered.
There is a path from $\writeEE{y}{2}$ to $\writeEE{y}{4}$.
Hence, this is a \emph{maybe data race} because there is a possible choice of write-read dependency
under which both writes are not in a race.

\begin{figure}[tp]
  \centering
\begin{subfigure}{.4\textwidth}
  \centering
    \bda{ll|l|l}
   & \thread{1}{}
   & \thread{2}{}
   & \thread{3}{}
  \\ \hline
  1. & \writeE{y} & &
  \\ 2. &  \writeE{x} & &
  \\ 3. & & \readE{x} &
  \\ 4. & & \writeE{y} &
  \\ 5. & & & \writeE{x}
  \eda
  \caption{Trace}
\end{subfigure}
\begin{subfigure}{.4\textwidth}
  \centering
\begin{tikzpicture}[->,>=stealth',shorten >=1pt,auto,node distance=1.5cm,semithick]
            \node[draw=black] (A) {$\writeEE{y}{1}$};
            \node[draw=black] (B) [below of=A, yshift=.5cm] {$\writeEE{x}{2}$};
            \node[draw=black] (C) [below right of=B] {$\readEE{x}{3}$};
            \node[draw=black] (D) [below of=C, yshift=.5cm] {$\writeEE{y}{4}$};
            \node[draw=black] (E) [below right of=D] {$\writeEE{x}{5}$};            
            
            \path (A) edge node {} (B);
            \path (B) edge [dashed] node  {} (C);            
            \path (C) edge node {} (D);
            \path (E.north) edge [dashed, bend right=15] node {} (C.east);            
  \end{tikzpicture}  
   \caption{Edges}
\end{subfigure}  
\caption{Data Race Diagnosis for Subfigure~\ref{fig:example-wrd-false-negative-inaccurate-trace}}
\label{fig:diagnosis-b}
\end{figure}

Figure~\ref{fig:diagnosis-b} shows
the (inaccurate) trace from Subfigure~\ref{fig:example-wrd-false-negative-inaccurate-trace}.
Edges are on the right. There are two write candidates for the read that could possibly
form a write-read dependency. We are conservative and include all possibilities.
Events $\writeEE{x}{2}$ and $\writeEE{x}{5}$ are unordered under the standard happens-before relation (considering solid edges only).
Considering the entire graph, solid and dotted edges, we find that there is no path that connects both events.
Hence, the pair $(\writeEE{x}{2}, \writeEE{x}{5})$ is a \emph{guaranteed data race}.
Events $\writeEE{y}{1}$ and $\writeEE{y}{4}$ are also unordered but there is a path that connects both events.
Hence, this pair represents a maybe data race.

Our diagnostic method to identify maybe and guaranteed data races requires two phases.
The first phase resembles existing algorithms such as FastTrack and SHB.
We achieve the same (worst-case) time complexity but require (worst-case) quadratic, in the size of the trace, space
to store edges. We therefore apply our method always offline, unlike FastTrack and SHB that can applied online
while the program is running.
The second phase then examines data races reported in the first phase. For each such data race pair,
we perform a graph search that requires (worst-case) quadratic, in the size of the trace, time.

We have fully implemented our approach. Benchmarks for some real-world examples show
that our method is practical and is useful to examine the results reported
by FastTrack~\cite{flanagan2010fasttrack} and SHB~\cite{Mathur:2018:HFR:3288538:3276515}.
FastTrack ignores write-read dependencies whereas SHB imposes write-read dependencies.
Hence, FastTrack is only sound for the first race reported.

If a FastTrack/SHB race is a guaranteed race we can be certain that
(inaccurate) write-read dependencies have no impact on the analysis result.
If a SHB race is a maybe race, then tracing could be inaccurate.
If a FastTrack race is a maybe race, then this could be a false positive
because FastTrack ignores write-read dependencies.
The details of our approach follow below.

\section{Instrumentation and Tracing}
\label{sec:run-time-events}

We assume concurrent programs making use of shared variables
and acquire/release (a.k.a.~lock/unlock) primitives.
Further constructs such as fork and join are omitted for brevity.
Programs will be instrumented to derive a trace of events
when running the program.
The information recorded in the trace is of the following form.

\begin{definition}[Run-Time Traces and Events]
\label{def:run-time-traces-events}  
\bda{lcll}
  T & ::= & [] \mid \thread{i}{e} : T   & \mbox{Trace}
  \\ e,f,g & ::= &  \readEE{x}{j}
           \mid \writeEE{x}{j}
           \mid \lockEE{y}{j}
           \mid \unlockEE{y}{j}

           & \mbox{Events}
\eda
\end{definition}

A trace $T$ is a list of events. We adopt Haskell notation for lists
and assume that the list of objects $[o_1,\dots,o_n]$ is a shorthand
for $o_1:\dots:o_n:[]$. We write $\pp$ to denote the concatenation operator among lists.
For each event $e$, we record the thread id number $i$ in which the event took place,
written $\thread{i}{e}$.
We write $\readEE{x}{j}$ and $\writeEE{x}{j}$
to denote a read and write event on shared variable $x$ at position $j$.
We write $\lockEE{y}{j}$ and $\unlockEE{y}{j}$
to denote a lock and unlock event on mutex $y$
The number $j$ represents the position of the event in the trace.
We sometimes omit the thread id and position for brevity.

To instrument programs we assume a syntactic pre-processing step
where the source program is manipulated as follows.

\begin{definition}[Instrumentation]
\label{def:instrumentation}  
  \bda{cc}

  \rlabel{Acq} \
  \ba{lcl}
    \texttt{acquire(m);} & \Rightarrow &
  \ba{l}
  \texttt{acquire(m);}
  \\   \texttt{TRACE\_ACQ(m);}
  \ea
  \ea
  \\
  \rlabel{Rel} \
  \ba{lcl}
  \texttt{release(m);} & \Rightarrow &
  \ba{l}
  \texttt{TRACE\_REL(m);}
  \\ \texttt{release(m);}  
  \ea  
  \ea
  &
  \rlabel{RW} \
    \ba{lcl}
  \texttt{x = y;} & \Rightarrow &
  \ba{l}
  \texttt{tmp = y;}
  \\ \texttt{TRACE\_READ(y);}
  \\ \texttt{x = tmp;}
  \\ \texttt{TRACE\_WRITE(y);}
  \ea
  \ea
  \eda
\end{definition}  
We assume some (trace recording) primitives \texttt{TRACE\_X} that store the respective events in some trace.
For example, \texttt{TRACE\_ACQ(m)} issues the event $\thread{i}{\lockEE{m}{k}}$
and stores this event at trace position~$k$.
For brevity, we keep symbolic names such as~$m$. In the actual instrumentation,
they will be represented via hash codes.
A further technical detail is that the trace recording primitives
must have access to the thread id~$i$ associated to this program text.
We also need to increment the current trace position~$k$.

To avoid conflicts when generating and storing events,
each \texttt{TRACE\_X} operation is executed atomically.
We assume a program execution model where an atomically executed
operation issues a memory barrier. 
A consequence of introducing these memory barriers is that
the above instrumentation scheme enforces a sequential consistency memory model~\cite{Adve:1996:SMC:619013.620590}.
That is, for each thread all reads and writes are executed in the sequence
as they appear in the program text.

We consider the instrumentation of acquire/release operations.
We assume that for each acquire there is a matching release operation in the same thread.
This is a common requirement that must be fulfilled by mutex operations.
Each acquire issues the event \emph{after} the operation has executed.
In case of release, we issue the event \emph{before} the release operation executes.
This is done for the following reason.

After the acquire operation is executed no other acquire event for the same mutex
will be issued (due to mutual exclusion). We first record the release event before
we actually perform the release operation. Hence, the release event is issued right after the acquire event
and there can no be other competing release events on the same mutex.
The matching release must be in the same thread. Hence, the acquire and release event
are issued in sequence and this sequence is preserved when storing events.

In case of a prematurely terminated program where some acquire lacks a matching release,
we introduce a dummy release event. We conclude that the
instrumentation and tracing scheme laid out above, enjoys the following guarantees:
\begin{description}
\item[INST\_PO:] Events in a thread
  are traced in the order as they appear in the program text.
  
 \item[INST\_RA:] Release and acquire events are traced in the order as they
   are executed and therefore each acquire can be properly matched with its corresponding release
   by scanning through the trace.
\end{description}

We consider in more detail the instrumentation of reads and writes.
For reads and writes we issue a trace recording primitive \emph{after} the actual operation took place.
To ensure a low run-time overhead and avoid blocking of the running thread,
trace recording relies on lock-free operations and/or perform
the trace recording in a separate helper thread.
The issue is that intra-thread event recordings may be out of sync
relative to the actual program execution order.
Hence, it is possible that the event of an earlier read appears after some write event
if both event recordings take place concurrently.
Recording the read/write event \emph{before} the actual operation does not help either.

We would need to synchronize the actual operation
\emph{and} the recording of the associated event by carrying out both steps atomically.
Under a pessimistic scheme, shared operations on non-conflicting variables would
need to be synchronized as well. This is clearly not practical.
The work reported in~\cite{Cao:2017:HRD:3134419.3108138} explores some more optimistic schemes.
A managed run-time is assumed to track dependencies.
The run-time overhead can still be significant as the experimental results show.

Hence, we assume a much more light-weight instrumentation scheme
where there is no synchronization among concurrently taking place tracing operations.
Such a scheme for example is applied by the popular
RoadRunner tracing tool~\cite{flanagan2010roadrunner}.
The down-side is that there is no guarantee that read/write events that
are not protected by a mutex are accurately ordered in the trace.
To deal with such inaccurate traces we apply
some diagnostic methods to distinguish
between guaranteed and maybe data races.

\section{Happens-Before Relation}
\label{sec:happens-before-relation}

We revisit the standard happens-before (HB) relation
and the schedulable happens-before (SHB) relation
that incorporates write-read dependencies.
SHB determines write-read dependencies based on the position in the trace.
We assume that tracing of unsynchronized reads and writes is inaccurate
and therefore some write-read dependencies are missing or wrongly imposed.
We introduce more general SHB relations that take into account such inaccuracies.
This provides the theoretical basis for our diagnostic methods in the upcoming section.

First, we introduce some notation and helper functions.
For trace $T$, we assume some functions to access the thread id and position of $e$. 
We define $\compTIDP{T}{e} = j$ if $T=T_1 \pp\ [\thread{j}{e}] \pp\ T_2$ for some traces $T_1, T_2$.
We define $\posP{T}{\readEE{x}{j}} = j$,
$\posP{T}{\writeEE{x}{j}} = j$,
$\posP{T}{\lockEE{y}{j}} = j$
and
$\posP{T}{\unlockEE{y}{j}} = j$
to extract the trace position from an event.
We assume that the trace position is \emph{correct}:
If $\posP{T}{e} = n$ then $T=\thread{i_1}{e_1}: \dots : \thread{i_{n-1}}{e_{n-1}} : \thread{i}{e} : T'$
for some events $\thread{i_k}{e_k}$ and trace $T'$.
We sometimes drop the component $T$ and
write $\compTID{e}$ and $\pos{e}$ for short.


For trace $T$, we define $\events{T} = \{ e \mid \exists T_1,T_2,j. T = T_1 \pp [\thread{j}{e}] \pp T_2 \}$
to be the set of events in $T$.
We write $e \in T$ if $e \in \events{T}$.

We consider the standard happens-before (ordering) relation among events.

\begin{definition}[Happens-Before]
\label{def:happens-before}  
  Let $T$ be a trace.
  We define a relation $\hb{}{}$ among trace events
  as the smallest partial order such that the following holds:
  \begin{description}
  \item[Program order (PO):]
    Let $e, f \in T$. Then, $\hb{e}{f}$ iff
    $\compTID{e} = \compTID{f}$ and $\pos{e} < \pos{f}$.
  \item[Release-acquire dependency (RAD):]
    Let $\unlockEE{y}{j}, \lockEE{y}{k} \in T$. Then,
    we define $\hb{\unlockEE{y}{j}}{\lockEE{y}{k}}$ iff
    $j < k$  where
    $\compTID{\unlockEE{y}{j}} \not= \compTID{\lockEE{y}{k}}$ and
    for all $e \in T$ where $j < \pos{e}$, $\pos{e} < k$
    and $\compTID{\unlockEE{y}{j}} \not= \compTID{e}$ we find that $e$
    is not an acquire event on $y$.
  \end{description}
  We refer to $\hb{T}{}{}$ as the \emph{happens-before} relation derived from trace $T$.

    We say two events $e, f \in T$ are \emph{unsynchronized} with respect to each other
  if neither $\hb{e}{f}$, nor $\hb{f}{e}$ holds.
  In such a situation, we write $\notHB{e}{f}$.
\end{definition}

The guarantees {\bf INST\_PO} and {\bf INST\_RA} of our instrumentation and tracing scheme ensure
that the happens-before relation derived from a trace
is not affected by any inaccuracies during tracing.
Hence, for two events $e, f$ in happens-before relation $\hb{e}{f}$ we guarantee
that $e$ was executed before~$f$.

Mathur and coworkers~\cite{Mathur:2018:HFR:3288538:3276515}
extend the happens-before relation by imposing the following condition.

\begin{definition}[Schedulable Happens-Before~\cite{Mathur:2018:HFR:3288538:3276515}]
\label{def:schedulable-happens-before}  
  Let $T$ be a trace.
  We define a relation $\shb{}{}$ among trace events
  as the smallest partial order that satisfies
  PO and RAD from Definition~\ref{def:happens-before}
  and the following additional condition:
  \begin{description}
  \item[Write-read dependency (WRD):]
    Let $\readEE{x}{j}, \writeEE{x}{k} \in T$.
    Then, $\shb{\writeEE{x}{j}}{\readEE{x}{k}}$ iff
    $j < k$ 
    and
    for all $e\in T$ where
    $j < \pos{e}$ 
    and $\pos{e} < k$ 
    we find that $e$ is not a write event on $x$.
  \end{description}
  We refer to $\shb{}{}$ as the \emph{schedulable happens-before}
  relation derived from trace $T$.
\end{definition}
Condition WRD states that if a write precedes a read and there is no other write in between,
then the read must happen after that write.
The intention is that write-read dependencies affecting the program execution
must be respected when analyzing the trace.
Without the WRD condition, we may wrongly conclude that two events are unsynchronized.
This potentially leads to false positives in case of happens-before based data race prediction.

Recall the program in Figure~\ref{subfig:program} and
the trace in Figure~\ref{subfig:actual-trace} obtained from running the instrumented program.
The read value at trace position~$2$ depends on the prior write at trace position~$1$.
The control flow is affected by the read value. The if-condition holds and therefore \texttt{y = 2}
is executed. This results in the write event at trace position~$3$.
Based on the schedulable happens-before relation, we conclude that the write at trace position~$1$
happens before the write at trace position~$3$.
In detail, we find $\shb{\writeEE{x}{1}}{\readEE{x}{2}}$ due to the WRD condition
and $\shb{\readEE{x}{2}}{\writeEE{x}{3}}$ due to the PO condition.
Hence, $\shb{\writeEE{x}{1}}{\writeEE{x}{3}}$.

Without the WRD condition, $\writeEE{x}{1}$ and $\writeEE{x}{3}$ appear to be unsynchronized.
Hence, we predict that there must be
an interleaved execution where both events appear right next to each other in the resulting trace.
This prediction is clearly wrong as $\writeEE{x}{1}$ must happen before $\writeEE{x}{3}$
by observing the possible control flow when executing the program.

We conclude that the WRD condition is important to make valid predictions about the run-time behavior of programs.
The WRD condition in Definition~\ref{def:schedulable-happens-before}
identifies the write that affects the read based on the position
in the trace. In the presence of inaccurate tracing,
the event of the actual \emph{last write} that affects the read might not be
the closest write event in terms of the position in the trace.
Recall the examples in Section~\ref{sec:overview}.
Hence, the schedulable happens-before relation derived from the trace may suggest
trace reorderings that do not correspond to any actual (interleaved) program execution.
Hence, the WRD condition is sensitive to the accuracy of tracing.

To make valid predictions in the presence of inaccurate tracing,
we generalize the WRD condition as follows.
Instead of characterizing the last write that precedes the read in terms
of the position in the trace, we characterize all potential WRD candidates.
Potential WRD candidates are either
(1) writes that are unsynchronized with respect to the read, or
(2) writes that immediately happen before that read.
To identify cases (1) and (2) we rely on the standard happens-before relation
$\hb{}{}$. Recall that $\hb{}{}$ is not affected by inaccurate tracing
of unsynchronized reads and writes.

\begin{definition}[Read/Write Events]
  Let $T$ be a well-formed trace.
  We define $\rwTx$ as the set of all read/write
  events in $T$ on some variable $x$.
\end{definition}

\begin{definition}[WRD Candidates]
\label{def:wrd-candidates}  
Let $T$ be a trace and $\readEE{x}{j} \in T$.
We define
\bda{lcl}
W_1 & = & \{ \writeEE{x}{i} \mid \writeEE{x}{i} \in T \wedge \notHB{\readEE{x}{j}}{\writeEE{x}{i}} \wedge
 \\ && \ \ \ \ \ \ \ \ \ \ 
  \neg \exists \writeEE{x}{k} \not= \writeEE{x}{i}. (\notHB{\readEE{x}{j}}{\writeEE{x}{k}} \wedge \hb{\writeEE{x}{i}}{\writeEE{x}{k}} ) \}
 \\
 W_2 & = & \{ \writeEE{x}{i} \mid \writeEE{x}{i} \in T \wedge \hb{\writeEE{x}{i}}{\readEE{x}{j}} \wedge
 \\ && \ \ \ \ \ \ \ \ \ \ 
  \neg \exists \writeEE{x}{k} \not= \writeEE{x}{i}. (\hb{\writeEE{x}{k}}{\readEE{x}{j}} \wedge \hb{\writeEE{x}{i}}{\writeEE{x}{k}}) \}
 \eda
  Let $W = W_1 \cup W_2$.
  Then, we refer to $W$ as the \emph{WRD candidates}
  w.r.t.~$\readEE{x}{j}$. 
\end{definition}
The set $W_1$ covers all writes that are unsynchronized with respect to the read.
We refer to $W_1$ as the \emph{unsynchronized} WRD candidates.
The set $W_2$ covers all writes that happen before that read.
We refer to $W_2$ as the \emph{synchronized} WRD candidates.
For both cases, the write must be \emph{closest} to the read.
In case of $W_1$, this is achieved by side condition
$\neg \exists \writeEE{x}{k} \not= \writeEE{x}{i}. (\notHB{\readEE{x}{j}}{\writeEE{x}{k}} \wedge \hb{\writeEE{x}{i}}{\writeEE{x}{k}} )$
and in case of $W_2$ by the side condition
$\neg \exists \writeEE{x}{k} \not= \writeEE{x}{i}. (\hb{\writeEE{x}{k}}{\readEE{x}{j}} \wedge \hb{\writeEE{x}{i}}{\writeEE{x}{k}})$.
Depending on the context, we write $W(e), W_1(e), W_2(e)$ to highlight
that we consider the WRD candidates for a specific read event~$e$.

\begin{example}
  Consider the following trace.
  \bda{ll|l|l}
  & \thread{1}{} & \thread{2}{} & \thread{3}{}
  \\ \hline
  1. & \writeE{x} &&
  \\
  2. & && \writeE{x}
  \\
  3. && \lockE{y}
  \\
  4. && \writeE{x}
  \\
  5. && \unlockE{y}
  \\
  6. & \lockE{y} &
  \\
  7. & \unlockE{y} &
  \\
  8. & \readE{x} &
  \eda
  Consider the read event $\readEE{x}{8}$.
  The WRD candidates as characterized by Definition~\ref{def:wrd-candidates}
  are as follows:
  $W_1 = \{ \writeEE{x}{2} \}$ and $W_2 = \{ \writeEE{x}{1}, \writeEE{x}{4}\}$.
  
  Event $\writeEE{x}{2} \in W_1$ because $\hb{\readEE{x}{8}}{\writeEE{x}{2}}$
  and there is no other write that happens in between.
  Event $\writeEE{x}{2}$ is the closest write based on the trace position.
  Hence, this is the (sole) WRD candidate as characterized by Definition~\ref{def:schedulable-happens-before}.

  In addition, we consider $\writeEE{x}{1}$ and  $\writeEE{x}{4} \in W_2$.
  Both writes are unsynchronized with respect to $\readEE{x}{8}$ as well as
  $\writeEE{x}{2}$.
\end{example}

\begin{proposition}
  \label{prop:prop}  
The write event as characterized by Definition~\ref{def:schedulable-happens-before}
must be in one of the sets $W_1$ and $W_2$.
\end{proposition}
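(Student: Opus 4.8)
The plan is to name the write singled out by the WRD condition of Definition~\ref{def:schedulable-happens-before} and then, by a short case analysis on its happens-before relationship to the read, show that it lands in $W_2$ in the synchronized case and in $W_1$ in the unsynchronized case. Write $\readEE{x}{r}$ for the read and let $\writeEE{x}{w}$ be the write with $\shb{\writeEE{x}{w}}{\readEE{x}{r}}$, so that $w < r$ and no write on $x$ occupies a trace position strictly between $w$ and $r$.

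First I would record the basic monotonicity fact that $\hb{e}{f}$ implies $\pos{e} < \pos{f}$. This holds because both generators of $\hb{}{}$, namely PO and RAD, relate a smaller to a larger trace position, and $\hb{}{}$ is their least partial-order (hence transitive) closure; transitivity preserves strict position increase. A first consequence is that $\hb{\readEE{x}{r}}{\writeEE{x}{w}}$ is impossible, since it would force $r < w$. Thus exactly one of two situations remains: either $\hb{\writeEE{x}{w}}{\readEE{x}{r}}$, or $\writeEE{x}{w}$ and $\readEE{x}{r}$ are unsynchronized, i.e.~$\notHB{\readEE{x}{r}}{\writeEE{x}{w}}$.

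In the synchronized case I claim $\writeEE{x}{w} \in W_2$. The first conjunct of $W_2$ holds by assumption, so it remains to refute the existence of a dominating write. Suppose some $\writeEE{x}{k} \neq \writeEE{x}{w}$ satisfied $\hb{\writeEE{x}{k}}{\readEE{x}{r}}$ and $\hb{\writeEE{x}{w}}{\writeEE{x}{k}}$. By monotonicity the first relation gives $k < r$ and the second gives $w < k$, so $\writeEE{x}{k}$ is a write on $x$ whose position lies strictly between $w$ and $r$, contradicting the defining property of $\writeEE{x}{w}$. Hence no such $k$ exists and $\writeEE{x}{w} \in W_2$.

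The unsynchronized case is where the real work lies, and I expect it to be the main obstacle. Here I would again try to refute a dominating write: suppose $\writeEE{x}{k} \neq \writeEE{x}{w}$ with $\notHB{\readEE{x}{r}}{\writeEE{x}{k}}$ and $\hb{\writeEE{x}{w}}{\writeEE{x}{k}}$. Monotonicity still yields $w < k$, but now the hypothesis on $\writeEE{x}{k}$ only says it is \emph{unsynchronized} with the read and gives no bound relating $k$ to $r$; in particular it does not immediately place $\writeEE{x}{k}$ between $w$ and $r$. To close the argument one must show that such a dominating, read-unsynchronized write cannot sit beyond the read in the trace, i.e.~that $k < r$ is forced; only then does $\writeEE{x}{k}$ become an intervening write on $x$ and contradict the choice of $\writeEE{x}{w}$. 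Establishing this positional bound, ruling out a later write that is still happens-before-dominated by $\writeEE{x}{w}$ yet unsynchronized with $\readEE{x}{r}$, is the delicate point, and is presumably where the well-formedness assumption on $T$ must be invoked. Once $k < r$ is secured the contradiction is immediate and we conclude $\writeEE{x}{w} \in W_1$, completing the case split and hence the proof.
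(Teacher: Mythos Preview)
Your case split on whether the WRD write is HB-related to the read is exactly the paper's approach, and your treatment of the synchronized ($W_2$) case coincides with it. For the unsynchronized case the paper's proof is even terser than yours: it simply asserts that if the last write $w$ is unsynchronized with the read then $w \in W_1(e)$, without touching the maximality side-condition of $W_1$ at all. So the obstacle you isolate --- that a dominating write $\writeEE{x}{k}$ unsynchronized with the read need not satisfy $k < r$ --- is not addressed in the paper either.

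Your instinct that this is the delicate point is right, and no well-formedness hypothesis stated in the paper closes it. Take the trace $[\thread{1}{\writeEE{x}{1}},\ \thread{2}{\readEE{x}{2}},\ \thread{1}{\writeEE{x}{3}}]$. The WRD write for $\readEE{x}{2}$ is $\writeEE{x}{1}$; both writes are unsynchronized with the read, and $\hb{\writeEE{x}{1}}{\writeEE{x}{3}}$ by program order, so $\writeEE{x}{3}$ dominates $\writeEE{x}{1}$ and hence $\writeEE{x}{1} \notin W_1$. Since neither write is HB-before the read, $W_2 = \emptyset$ and thus $\writeEE{x}{1} \notin W_1 \cup W_2$. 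The gap you flag is therefore genuine: the proposition as written fails on this trace, and the paper's one-line argument for the $W_1$ case glosses over precisely the step you could not complete.
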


\begin{proposition}
  \label{prop:prop-one}
  Let $T$ be a trace and $e$ be some read event.
  The number of elements in $W_1(e)$ is $O(k)$ where $k$ is the number of threads.
  The same applies to $W_2(e)$.
\end{proposition}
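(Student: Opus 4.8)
The plan is to bound the cardinality of $W_1(e)$ and $W_2(e)$ by showing that each thread can contribute at most a constant number (in fact, at most one) of WRD candidates, so that the total is $O(k)$ where $k$ is the number of threads. The key observation is that program order (PO) linearly orders all events within a single thread under $\hb{}{}$, and the ``closest write'' side conditions in Definition~\ref{def:wrd-candidates} then prevent two PO-comparable writes in the same thread from both qualifying.

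First I would handle $W_2(e)$. Suppose for contradiction that a single thread $t$ contributes two distinct writes $\writeEE{x}{i}$ and $\writeEE{x}{i'}$ to $W_2(e)$, with $\pos{\writeEE{x}{i}} < \pos{\writeEE{x}{i'}}$. Since both events are in thread $t$, the PO clause of Definition~\ref{def:happens-before} gives $\hb{\writeEE{x}{i}}{\writeEE{x}{i'}}$. Because $\writeEE{x}{i'} \in W_2(e)$ we have $\hb{\writeEE{x}{i'}}{\readEE{x}{j}}$ where $e = \readEE{x}{j}$. Now instantiating the side condition of $W_2$ for the candidate $\writeEE{x}{i}$ with the witness $\writeEE{x}{k} := \writeEE{x}{i'}$, we see that $\hb{\writeEE{x}{i'}}{\readEE{x}{j}}$ and $\hb{\writeEE{x}{i}}{\writeEE{x}{i'}}$ both hold, contradicting the requirement that no such $\writeEE{x}{k}$ exists. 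Hence each thread contributes at most one element to $W_2(e)$, giving $|W_2(e)| \le k$.

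The argument for $W_1(e)$ is symmetric but uses the unsynchronized side condition. Suppose thread $t$ contributes two distinct writes $\writeEE{x}{i}, \writeEE{x}{i'} \in W_1(e)$ with $\pos{\writeEE{x}{i}} < \pos{\writeEE{x}{i'}}$. Again PO gives $\hb{\writeEE{x}{i}}{\writeEE{x}{i'}}$, and since $\writeEE{x}{i'} \in W_1(e)$ we have $\notHB{\readEE{x}{j}}{\writeEE{x}{i'}}$. Instantiating the $W_1$ side condition for candidate $\writeEE{x}{i}$ with witness $\writeEE{x}{k} := \writeEE{x}{i'}$ yields a contradiction in exactly the same way. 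Thus $|W_1(e)| \le k$, and therefore $|W(e)| \le |W_1(e)| + |W_2(e)| = O(k)$.

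The main obstacle I anticipate is not the counting itself but justifying the claim that two writes in the same thread are always $\hb{}{}$-comparable; this relies on the fact that $\pos{}$ is a total order on the trace and that PO orders same-thread events by position, which is exactly the PO clause of Definition~\ref{def:happens-before}. A subtlety worth checking is that the writes we count really are on the same variable $x$ and belong to the candidate sets, so that the side conditions genuinely apply --- but since $W_1, W_2 \subseteq \rwTx$ by construction and the side conditions quantify over writes $\writeEE{x}{k}$ on $x$, the instantiation is legitimate. No deeper structural property of $\hb{}{}$ beyond PO is needed, so the proof reduces to two nearly identical pigeonhole arguments.
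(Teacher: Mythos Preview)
Your proof is correct and follows essentially the same approach as the paper: both arguments use the ``no write in between'' side condition together with an $\hb{}{}$-ordering between two putative candidates to derive a contradiction, concluding that at most one candidate per thread survives. The only difference is packaging: the paper first establishes the slightly stronger auxiliary facts (Propositions~\ref{prop:w1-unsync} and~\ref{prop:w2-unsync}) that \emph{all} pairs of elements in $W_1(e)$ (respectively $W_2(e)$) are mutually unsynchronized, and then deduces the per-thread bound from that; you instead inline the special case where the two candidates lie in the same thread and are therefore PO-ordered, which is exactly what is needed for the $O(k)$ bound.
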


The set $W_1 \cup W_2$ is a conservative approximation of all potential writes that may affect the read.
Instead of imposing the WRD condition for a single write candidate
as done for SHB, we consider the following variants.
We (a) impose all WRD candidates, (b) pick one of the WRD candidates as characterized by Definition~\ref{def:wrd-candidates}.
First, we consider variant~(a).

\begin{definition}[Strong Schedulable Happens-Before]
\label{def:strong-schedulable-happens-before}  
  Let $T$ be a trace.
  We define a relation $\strongshb{}{}$ among trace events
  as the smallest partial order
  that satisfies
  PO and RAD from Definition~\ref{def:happens-before}
    and the following additional condition:
  \begin{description}
  \item[Strong write-read dependency (StrongWRD):]
    Let $\readEE{x}{j}\in T$ and $W(\readEE{x}{j})$ its set of WRD candidates.
    Then, $\strongshb{e}{\readEE{x}{j}}$ for each $e \in W(\readEE{x}{j})$.
  \end{description}
  We refer to $\strongshb{}{}$ as the \emph{strong schedulable happens-before}
  relation derived from trace $T$.
\end{definition}

A strong schedulable happens-before relation may not exist.

\begin{example}
\label{ex:no-strong-shb}  
  Consider the trace
  $T = [
    \thread{1}{\readEE{y}{1}},
    \thread{1}{\writeEE{x}{2}},
    \thread{2}{\readEE{x}{3}},
    \thread{2}{\writeEE{y}{4}} ]$.
  Based on the program order, we find
  $\hb{\readEE{y}{1}}{\writeEE{x}{2}}$ and
  $\hb{\readEE{x}{3}}{\writeEE{y}{4}}$.
  Furthermore, $W(\readEE{y}{1}) = \{ \writeEE{y}{4} \}$
  and $W(\readEE{x}{3}) = \{ \writeEE{x}{2} \}$.
  The strong schedule happens-before relation additionally imposes
  $\strongshb{\writeEE{y}{4}}{\readEE{y}{1}}$ and
  $\strongshb{\writeEE{x}{2}}{\readEE{x}{3}}$.
  In combination with the HB relations we encounter a cycle.
  Consider
  $ \readEE{y}{1} \hbSym\
  \writeEE{x}{2} \strongshbSym\
  \readEE{x}{3} \hbSym\
  \writeEE{y}{4} \strongshbSym\
  \readEE{y}{1}$.
  This contradicts the properties of a partial order.
  Hence, the strong schedulable happens-before relation does not exist
  for this case.
\end{example}

Next, we consider variant~(b).
Instead of imposing all WRD candidates,
we (arbitrarily) pick one of the candidates.

\begin{definition}[Some Schedulable Happens-Before]
\label{def:Some-schedulable-happens-before}  
  Let $T$ be a trace.
  We define a relation $\someshb{}{}$ among trace events
  as the smallest partial order
  that satisfies
  PO and RAD from Definition~\ref{def:happens-before}
    and the following additional condition:
  \begin{description}
  \item[Some write-read dependency (SomeWRD):]
    Let $\readEE{x}{j}\in T$ and $W(\readEE{x}{j})$ its set of WRD candidates.
    Then, $\someshb{e}{\readEE{x}{j}}$ for some $e \in W(\readEE{x}{j})$.
  \end{description}
  We refer to $\someshb{}{}$ as the \emph{some schedulable happens-before}
  relation derived from trace $T$.
\end{definition}

We expect that an instance of $\someshb{}{}$ always exists.
This is only the case if there is an initial write for each read.
Recall Example~\ref{ex:no-strong-shb}.
For each read, there is only one choice of WRD candidate.
As the example shows, the resulting order relation has a cycle.

To ensure that an instance of $\someshb{}{}$ exists,
we demand that for each read there must be always a write happening before that read.
Then, $\shb{}{}$ is a specific instance of a $\someshb{}{}$.

\begin{definition}[Initial Write]
  Let $T$ be a trace and $\hb{}{}$ the happens-before relation derived from $T$.
  We say that each read event $e$ in $T$ enjoys an \emph{initial write}
  if $W_2(e) \not = \{ \}$.
\end{definition}

\begin{proposition}
  \label{prop:prop-two}
  Let $T$ be a trace where each read event enjoys an initial write.
  Then, $\shb{}{}$ is an instance of $\someshb{}{}$.
\end{proposition}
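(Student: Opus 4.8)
The plan is to exhibit the specific choice of WRD candidates under which an instance of $\someshb{}{}$ coincides with $\shb{}{}$, namely the choice that for every read always selects the position-closest preceding write --- exactly the write singled out by the WRD condition of Definition~\ref{def:schedulable-happens-before}. The whole argument rests on one structural fact about the standard happens-before relation: if $\hb{e}{f}$ then $\pos{e} < \pos{f}$. I would establish this first, by noting that both generators PO and RAD of Definition~\ref{def:happens-before} relate an event to one of strictly larger trace position, and that, since the trace positions form a strict total order (hence a transitive relation), this strict inequality is preserved when forming the transitive closure that yields $\hb{}{}$.

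Next I would use the initial-write assumption to guarantee that the WRD write exists for every read. Fix a read $\readEE{x}{j}$. By assumption $W_2(\readEE{x}{j}) \neq \{\}$, so there is a write $\writeEE{x}{i}$ with $\hb{\writeEE{x}{i}}{\readEE{x}{j}}$, and by the position-monotonicity fact $i < j$. Hence the set of writes on $x$ occurring strictly before $\readEE{x}{j}$ in trace position is nonempty and has a unique position-maximal element $w = \writeEE{x}{i_0}$ with $i_0 < j$ and no write on $x$ at a position strictly between $i_0$ and $j$. This $w$ is precisely the write characterized by the WRD condition, so $\shb{w}{\readEE{x}{j}}$ is the only write-read edge that Definition~\ref{def:schedulable-happens-before} contributes for $\readEE{x}{j}$.

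I would then invoke Proposition~\ref{prop:prop} to conclude $w \in W_1(\readEE{x}{j}) \cup W_2(\readEE{x}{j}) = W(\readEE{x}{j})$, so selecting $w$ for each read is a legitimate instantiation of the SomeWRD condition of Definition~\ref{def:Some-schedulable-happens-before}. Under this selection the SomeWRD generators are exactly the edges $w \to \readEE{x}{j}$, that is, exactly the WRD generators of $\shb{}{}$, while the PO and RAD generators agree verbatim in both definitions. Since the smallest partial orders over two identical generating relations coincide, $\shb{}{}$ equals this instance of $\someshb{}{}$, as claimed. Both relations are genuine partial orders because all of their generators respect the strict total order on positions and hence cannot create a cycle.

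I expect the main obstacle to be the second step --- more precisely, pinpointing where the initial-write hypothesis is actually needed. Without it a read could have a nonempty candidate set $W$ consisting only of writes that \emph{follow} it in the trace (unsynchronized $W_1$ candidates), so $\someshb{}{}$ would still impose a write-read edge while Definition~\ref{def:schedulable-happens-before} imposes none, and the two relations would differ. The role of the position-monotonicity fact together with $W_2 \neq \{\}$ is exactly to rule this degenerate case out uniformly, guaranteeing a preceding write on which both relations agree.
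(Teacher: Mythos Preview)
Your proposal is correct and follows essentially the same route as the paper: select for each read the position-closest preceding write as the SomeWRD candidate (legitimized by Proposition~\ref{prop:prop}), observe that the generators of the two relations then coincide, and rule out cycles via position monotonicity of all generators. You are in fact more explicit than the paper's own proof about why the initial-write hypothesis is needed---the paper tacitly assumes the WRD write exists for every read without pointing to that assumption, whereas you correctly trace it to $W_2 \neq \{\}$ together with position monotonicity of $\hb{}{}$.
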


Besides $\shb{}{}$, we also show that $\hb{}{}$ is an instance.

\begin{proposition}
\label{prop:someshb-hb-equivalence}  
  Let $T$ be a trace where each read event enjoys an initial write.
  Then, $\hb{}{}$ is an instance of $\someshb{}{}$.
\end{proposition}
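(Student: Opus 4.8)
The plan is to exhibit an explicit choice of WRD candidates that witnesses $\hb{}{}$ as an instance of $\someshb{}{}$. Recall that an instance of $\someshb{}{}$ is obtained by fixing, for every read event $e$ in $T$, one candidate $c(e) \in W(e)$, and then taking the smallest partial order that satisfies PO, RAD, and the chosen edges $\someshb{c(e)}{e}$. Thus it suffices to produce a choice function $c$ for which the induced relation coincides with $\hb{}{}$.

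First I would invoke the hypothesis. Since every read event $e$ enjoys an initial write, we have $W_2(e) \not= \{ \}$ by the definition of initial write, so for each read $e$ we may pick some $c(e) \in W_2(e)$. By the defining condition of $W_2$ in Definition~\ref{def:wrd-candidates}, every element of $W_2(e)$ satisfies $\hb{c(e)}{e}$; in particular the chosen edge is already present in $\hb{}{}$.

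Next I would run the minimality argument. With the above choice, each WRD edge $\someshb{c(e)}{e}$ demanded by SomeWRD is subsumed by $\hb{}{}$, so $\hb{}{}$ is a partial order satisfying PO, RAD, and all chosen WRD edges. Conversely, $\hb{}{}$ is by definition the smallest partial order satisfying PO and RAD, hence any partial order satisfying PO and RAD, in particular the instance under construction, must contain $\hb{}{}$. Since the adjoined WRD edges contribute nothing beyond what $\hb{}{}$ already contains, the smallest partial order satisfying PO, RAD, and these edges is exactly $\hb{}{}$. Therefore $\hb{}{}$ is the instance of $\someshb{}{}$ determined by $c$.

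I expect the argument to be essentially immediate, so the only point requiring care is the bookkeeping around the word \emph{instance}: one must make explicit that an instance is parametrized by a per-read choice of candidate, and then check that adjoining edges already contained in $\hb{}{}$ to the generating set PO $\cup$ RAD does not enlarge the induced smallest partial order. This is the same style of reasoning used for Proposition~\ref{prop:prop-two}, but simpler: there one selects the SHB candidate and must invoke Proposition~\ref{prop:prop} to locate it inside $W$, whereas here selecting a $W_2$ candidate guarantees outright that the chosen edge already lies below the read in $\hb{}{}$.
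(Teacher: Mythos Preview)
Your proposal is correct and follows essentially the same approach as the paper: pick for each read a candidate from $W_2$, observe that the resulting WRD edge is already an $\hb{}{}$ edge, and conclude that the induced order coincides with $\hb{}{}$. Your write-up is in fact more careful than the paper's three-line argument, since you spell out the minimality step explicitly.
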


In general, the resulting $\someshb{}{}$ relations are distinct
from $\hb{}{}$ and $\shb{}{}$.

\begin{example}
  \label{ex:not-all-wrd-combos-yield-someshb}
  Consider the following trace (on the left)
  and the resulting WRD candidates
  and $\someshb{}{}$ instances (on the right).

\bda{lccl}  
  \ba{ll|l}
  & \thread{1}{} & \thread{2}{}
  \\ \hline
  1. & & \writeE{x}
  \\
  2. & \writeE{y} &
  \\
  3. & \readE{y} &
  \\
  4. & \writeE{x} &
  \\
  5. & & \readE{x}
  \\
  6. & & \writeE{y}
  \ea
  &\ \ \ \ \ &&
  \ba{l}
    W(\readEE{y}{3}) = \{ \writeEE{y}{2}, \writeEE{y}{6} \}
    \\ W(\readEE{x}{5}) = \{ \writeEE{x}{1}, \writeEE{x}{4} \}
    \\
    \\ 1. \ \ \someshb{\writeEE{y}{2}}{\readEE{y}{3}} \ \
              \someshb{\writeEE{x}{1}}{\readEE{x}{5}}
    \\ 2. \ \ \someshb{\writeEE{y}{2}}{\readEE{y}{3}} \ \
              \someshb{\writeEE{x}{4}}{\readEE{x}{5}}
    \\ 3. \ \ \someshb{\writeEE{y}{6}}{\readEE{y}{3}} \ \
              \someshb{\writeEE{x}{1}}{\readEE{x}{5}}
  \ea
\eda  
  The first instance corresponds to $\hb{}{}$ and
  the second instance to $\shb{}{}$.
  The third instance is different from HB and SHB.
\end{example}  

The example shows that not all combinations (four in our case) are feasible.
For $\someshb{\writeEE{y}{6}}{\readEE{y}{3}}$ 
the only choice left is $\someshb{\writeEE{x}{1}}{\readEE{x}{5}}$
because $\someshb{\writeEE{x}{4}}{\readEE{x}{5}}$ leads to a cycle.

\section{Data Race Diagnosis}
\label{sec:diagnosis}

We make use of the results of the previous section
to provide diagnostic information to examine if data races reported
are affected by inaccurately traced write-read dependencies.
For brevity, we only consider write-write races.
Write-read cases 
will be covered when discussing our implementation in the upcoming section.

\begin{definition}[Write-Write Data Races]
  Let $T$ be a trace where
  $e, f \in T$ are two write events on the same variable.
  We say $(e,f)$ are in a \emph{write-write HB data race}
  if neither $\hb{e}{f}$ nor $\hb{f}{e}$.

  We say $(e,f)$ are in a \emph{write-write \strongSHBRel\ data race}
  if neither $\strongshb{e}{f}$ nor $\strongshb{f}{e}$.

  Let $\someshb{}{}$ be a specific choice of a some schedulable happens-before relation.
  We say $(e,f)$ are in a \emph{write-write \someSHBRel\ data race}
  if neither $\someshb{e}{f}$ nor $\someshb{f}{e}$.
\end{definition}

We wish to answer the following questions.

\begin{enumerate}
    \item {\bf Q1:} Given a HB data race, is this a race for any instance of $\someshb{}{}$?
  \item {\bf Q2:} Given a HB data race, does an instance of $\someshb{}{}$ exists under which
  the race goes away?
\end{enumerate}

A yes to the first question means inaccurate tracing had no impact.
A yes to the second question means that we cannot be sure.
To check both questions we construct a graph from the trace.

\begin{definition}
\label{def:graph-edges}
Let $T$ be a trace.
We assume that for each read event $e \in T$
we have available the set $W(e)$ of WRD candidates.
We define a graph $G=(N,E)$ with a set $N$ of nodes
and set $E$ of directed edges as follows:
\begin{itemize}
\item $N = \{ e \mid e \in T \}$.
\item $E = \{ e \starrow{HB} f \mid e, f \in T \wedge \hb{e}{f} \}
    \cup \{ e \starrow{W(r)} f \mid \mbox{$r$ read event} \wedge e \in W(r) \}$.
\end{itemize}
We refer to $G$ as the \emph{graph derived from $T$}.

Let $e, f \in N$. We say that there is a \emph{path} from $e$ to $f$,
if we find $g_1,\dots,g_n \in N$
such that $e \rightarrow g_1 \rightarrow \dots \rightarrow g_n \rightarrow f$
where all nodes are distinct.
\end{definition}  

Edges are derived from either a HB relation or some WRD candidate.
They are labeled to keep track of their origin.
If this information does not matter, we write $e \rightarrow f$ for short.

The graph includes all possible combinations of WRD candidates.
If there is no path between two events involved in a data race,
we can answer {\bf Q1} positively.

\begin{proposition}[Guaranteed Data Race]
\label{prop:guaranteed-data-race}  
  Let $T$ be a trace and $G$ be the graph derived from $G$.
  Let $(e,f)$ be a pair of events that are in a write-write HB data race.
  If there is no path between $e$ and $f$ then
  for any choice of $\someshb{}{}$ we have that
  $(e,f)$ are in a write-write \someSHBRel\ data race.
\end{proposition}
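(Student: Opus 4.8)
The plan is to reduce the statement to a single reachability lemma and then argue by contradiction. Fix an arbitrary instance of $\someshb{}{}$ and suppose, for contradiction, that $(e,f)$ is \emph{not} a write-write \someSHBRel\ data race for this instance; then either $\someshb{e}{f}$ or $\someshb{f}{e}$ holds. I would establish the key lemma that $\someshb{a}{b}$ implies the existence of a path from $a$ to $b$ in $G$, for any events $a \neq b$. Applying it to whichever of $\someshb{e}{f}$, $\someshb{f}{e}$ holds produces a path from $e$ to $f$ or from $f$ to $e$ in $G$, contradicting the hypothesis that there is no path between $e$ and $f$. Since the instance of $\someshb{}{}$ was arbitrary, the conclusion holds for any choice.

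For the lemma, I would first recall from Definition~\ref{def:Some-schedulable-happens-before} that a fixed instance of $\someshb{}{}$ is the smallest partial order generated by (i) the PO pairs, (ii) the RAD pairs, and (iii) for each read $r$, a single chosen candidate pair $(w_r, r)$ with $w_r \in W(r)$. Assuming the instance exists, this smallest partial order coincides with the reflexive--transitive closure of these generators. The crucial observation is that every generator is an edge of $G$: pairs of types (i) and (ii) are instances of $\hb{}{}$, since the happens-before relation contains PO and RAD, so they appear as $\starrow{HB}$ edges in $E$; and each chosen pair $(w_r, r)$ of type (iii) satisfies $w_r \in W(r)$, so it appears as a $\starrow{W(r)}$ edge in $E$. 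Consequently, if $\someshb{a}{b}$ with $a \neq b$, then $a$ reaches $b$ through a finite chain $a = x_0 \to x_1 \to \dots \to x_m = b$ of generator steps, each of which is an edge of $G$. This is a walk from $a$ to $b$; a standard argument collapses any walk into a simple path with the same endpoints (excise the cycle between the first and last occurrence of any repeated node), giving a path in the sense of Definition~\ref{def:graph-edges}.

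The hard part will be justifying that $\someshb{}{}$, taken as the \emph{smallest partial order} satisfying the three conditions, adds no pairs beyond the reflexive--transitive closure of its generators. This rests on the existence of the instance: a partial order must be antisymmetric, so the transitive closure of the generators is itself a legitimate partial order exactly when it contains no nontrivial cycle (cf.\ the failure in Example~\ref{ex:no-strong-shb}), and in that acyclic case minimality forces no additional pairs. Once this is settled, the translation of generators into $G$-edges is immediate from the graph construction, and the walk-to-path reduction together with the symmetric treatment of $\someshb{f}{e}$ completes the argument.
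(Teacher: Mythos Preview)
Your proposal is correct and follows the same contrapositive route as the paper, whose entire proof reads ``Assume the contrary. Then, there would be a path. Contradiction.'' You have simply made explicit the lemma that the paper leaves to the reader: any $\someshb{}{}$-related pair is connected by a path in $G$, because the generators (PO, RAD, and the chosen WRD candidate per read) are all edges of $G$ by construction, and the smallest partial order over these generators is their reflexive--transitive closure whenever the instance exists. Your care about the antisymmetry/existence issue and the walk-to-path reduction are sound elaborations of details the paper does not spell out.
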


The above statements only applies to those $\someshb{}{}$ relations that actually
constitute a partial order. Recall that no every choice of WRD candidates
results in a partial order and $\strongshb{}{}$ does not necessarily exist.
See Examples~\ref{ex:no-strong-shb} and~\ref{ex:not-all-wrd-combos-yield-someshb}.

To answer {\bf Q2} we need to find path for a particular choice of WRD candidates.

\begin{proposition}[Maybe Data Race]
\label{prop:maybe-data-race}  
    Let $T$ be a trace and $G$ be the graph derived from $G$.
    Let $(e,f)$ be a pair of events that are in a write-write HB data race.
    We assume that there exists a path among $e$ and $f$
    with distinct nodes and the label $W(r)$ appears at most once.
    Then, there exists $\someshb{}{}$ under which $e$ and $f$ are not in a race.
\end{proposition}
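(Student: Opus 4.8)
The plan is to read a single write-read choice off the assumed path and extend it to a full \someSHBRel\ instance whose order contains that path. The first thing I would record is a structural observation: since $(e,f)$ is a write-write HB data race we have neither $\hb{e}{f}$ nor $\hb{f}{e}$, so a path consisting only of $\starrow{HB}$ edges is impossible (transitivity of $\hb{}{}$ would give $\hb{e}{f}$). Hence the assumed path must traverse at least one $\starrow{W(r)}$ edge. Combined with the hypothesis that a $W(r)$-labelled edge occurs at most once, the path uses \emph{exactly one} write-read edge, say $\writeEE{x}{i}\starrow{W(\readEE{x}{j})}\readEE{x}{j}$; by Definition~\ref{def:graph-edges} this edge witnesses $\writeEE{x}{i}\in W(\readEE{x}{j})$.

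Next I would build a concrete instance of $\someshb{}{}$ in the sense of Definition~\ref{def:Some-schedulable-happens-before}, i.e.\ a choice of one WRD candidate per read. For $\readEE{x}{j}$ I pick the path's witness $\writeEE{x}{i}$, and for every other read $r$ of $T$ I pick an arbitrary \emph{synchronized} candidate from $W_2(r)$; such a candidate exists under the standing assumption that every read enjoys an initial write, i.e.\ $W_2(r)\neq\{\}$ (the hypothesis of Proposition~\ref{prop:prop-two}). The reason for preferring $W_2$ on the off-path reads is that each such edge $c\starrow{}r$ already satisfies $\hb{c}{r}$ and is therefore subsumed by happens-before. Consequently the relation generated by PO, RAD and these SomeWRD edges equals the transitive closure of $\hb{}{}\cup\{\writeEE{x}{i}\starrow{}\readEE{x}{j}\}$.

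The crux is to check that this generated relation really is a strict partial order, since Definition~\ref{def:Some-schedulable-happens-before} only yields a legitimate $\someshb{}{}$ when the chosen edges are acyclic. The only edge not already present in $\hb{}{}$ is $\writeEE{x}{i}\starrow{}\readEE{x}{j}$, so a cycle could be closed only by a happens-before chain back, i.e.\ by $\hb{\readEE{x}{j}}{\writeEE{x}{i}}$. I would rule this out from $\writeEE{x}{i}\in W(\readEE{x}{j})=W_1\cup W_2$ (Definition~\ref{def:wrd-candidates}): if $\writeEE{x}{i}\in W_2$ then $\hb{\writeEE{x}{i}}{\readEE{x}{j}}$, so $\hb{\readEE{x}{j}}{\writeEE{x}{i}}$ would contradict antisymmetry of $\hb{}{}$; and if $\writeEE{x}{i}\in W_1$ then $\notHB{\readEE{x}{j}}{\writeEE{x}{i}}$ by definition, so $\hb{\readEE{x}{j}}{\writeEE{x}{i}}$ again fails. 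Either way the new edge points ``forward'' relative to $\hb{}{}$ and no cycle is created. I expect this acyclicity check to be the main obstacle, and it is exactly where the at-most-one hypothesis is essential: two independent write-read edges can jointly close a cycle together with happens-before (the phenomenon behind the non-existence of $\strongshb{}{}$ in Example~\ref{ex:no-strong-shb}), so the single-edge argument is what keeps the construction sound.

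Finally I would read the conclusion straight off the construction. Writing $\prec$ for the resulting $\someshb{}{}$ order, every $\starrow{HB}$ edge of the path lies in $\prec$ because $\hb{}{}\subseteq\prec$, and the unique $\starrow{W(\readEE{x}{j})}$ edge lies in $\prec$ by our SomeWRD choice for $\readEE{x}{j}$. Hence the path is a $\prec$-increasing chain from $e$ to $f$, giving $\someshb{e}{f}$, and therefore $(e,f)$ is not a write-write \someSHBRel\ data race under $\prec$, which is the required instance.
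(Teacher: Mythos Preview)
Your argument is essentially the paper's base case, and your acyclicity check is in fact cleaner than what the paper writes; but there is a gap in how you read the hypothesis. The condition ``the label $W(r)$ appears at most once'' is meant \emph{per read} $r$, not as a global cap of one WRD edge on the whole path. Indeed the paper's own proof opens with ``Nodes along the path are distinct. Hence the label $W(r)$ appears at most once'': since every $W(r)$-labelled edge has the same target $r$, a simple path can use at most one of them, for each $r$ separately. Nothing prevents the path from traversing several WRD edges $w_1\starrow{W(r_1)}r_1,\dots,w_m\starrow{W(r_m)}r_m$ for pairwise distinct reads $r_1,\dots,r_m$. Your deduction ``exactly one write-read edge'' therefore does not follow, and the construction that adds a \emph{single} new pair $\{\writeEE{x}{i}\to\readEE{x}{j}\}$ to $\hb{}{}$ covers only the case $m=1$.

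The paper proceeds exactly as you do for $m=1$ (pick the path's candidate at the distinguished read, pick a $W_2$ candidate at every other read, and argue that the only non-HB edge cannot close a cycle because membership in $W_1$ gives $\notHB{r}{w}$ and membership in $W_2$ gives $\hb{w}{r}$), and then disposes of $m>1$ by a one-line induction on the number of non-HB edges along the path. To repair your write-up you need the analogous step: with choices $w_1\to r_1,\dots,w_m\to r_m$ fixed by the path and $W_2$ choices elsewhere, show that the transitive closure of $\hb{}{}\cup\{w_i\to r_i\mid i\}$ is still acyclic. Your single-edge argument no longer suffices here, because a putative cycle may thread through several of the $w_i\to r_i$ together with HB edges that are \emph{not} on the given path; this is precisely the phenomenon you flag from Example~\ref{ex:no-strong-shb}, and it is why the paper singles out this case rather than absorbing it into the base argument.
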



\section{Implementation and Experiments}
\label{sec:experiments}

\begin{algorithm}
    \caption{Extended FastTrack algorithm (\SSHB\ algorithm)}\label{alg:vc-race-pairs}

\begin{tabular}{ll}
\begin{minipage}{.6\textwidth}      
            \begin{algorithmic}[1]
                \Procedure{acquire}{$i,x$}
                \State $\threadVC{i} = \supVC{\threadVC{i}}{\lockVC{x}}$
                \State $edges = edges \cup \{\lockVC{x} \gtEdge \thread{i}{\accVC{\threadVC{i}}{i}}\}$
                \EndProcedure
            \end{algorithmic}
\end{minipage}

&

\begin{minipage}{.4\textwidth}
            \begin{algorithmic}[1]
                \Procedure{release}{$i,x$}
                \State $\lockVC{x} = \threadVC{i}$
                \State $\incC{\threadVC{i}}{i}$
                \EndProcedure
            \end{algorithmic}
\end{minipage}
\end{tabular}

            \begin{algorithmic}[1]
                \Procedure{write}{$i,x$}
                \For{$\thread{j}{k} \in cw(x)$}
                \If {$k > \accVC{\threadVC{i}}{j}$}
                \State $races = races \cup \{(\thread{j}{k}, \thread{i}{\accVC{\threadVC{i}}{i}})\}$
                \EndIf
                \EndFor
                \State $cw(x) = \{ \thread{i}{\accVC{\threadVC{i}}{i}} \}
                \cup \{ \thread{j}{k} \mid \thread{j}{k} \in cw(x) \wedge
                k > \accVC{\threadVC{i}}{j} \}$
                
                \For{$\thread{j}{k} \in cr(x)$}
                \If {$k > \accVC{\threadVC{i}}{j}$}
                \State $edges = edges \cup \{\thread{i}{\accVC{\threadVC{i}}{i}} \gtEdge \thread{j}{k} \}$
                \State $races = races \cup \{(\thread{j}{k}, \thread{i}{\accVC{\threadVC{i}}{i}})\}$
                \EndIf
                \EndFor
                \State $\incC{\threadVC{i}}{i}$
                \EndProcedure
            \end{algorithmic}                       
            \begin{algorithmic}[1]
                \Procedure{read}{$i,x$}
                \For{$\thread{j}{k} \in cw(x)$}
                \If {$k > \accVC{\threadVC{i}}{j}$}
                \State $edges = edges \cup \{\thread{j}{k} \gtEdge \thread{i}{\accVC{\threadVC{i}}{i}}\}$
                \State $races = races \cup \{(\thread{i}{\accVC{\threadVC{i}}{i}}, \thread{j}{k})\}$
                \EndIf
                \EndFor
                \State $cr(x) = \{ \thread{i}{\accVC{\threadVC{i}}{i}} \}
                \cup \{ \thread{j}{k} \mid \thread{j}{k} \in cr(x) \wedge
                k > \accVC{\threadVC{i}}{j} \}$               
                \State $\incC{\threadVC{i}}{i}$
                \EndProcedure
            \end{algorithmic}
    
\end{algorithm}


\subsection{Implementation}

To implement the diagnostic methods described in the previous section,
we make use of an extended version of the FastTrack algorithm from \cite{flanagan2010fasttrack}
to (1) identify HB data race pairs and to (2) build up the graph
of HB and WRD candidate edges (see Definition~\ref{def:graph-edges}).
See Algorithm~\ref{alg:vc-race-pairs}.

We maintain several vector clocks.

\begin{definition}[Vector Clocks]
  A \emph{vector clock} $V$ is a list of \emph{time stamps} of the following form.
  \bda{rcl}
   V  & ::= & [i_1,\dots,i_n]
   \eda
   We assume vector clocks are of a fixed size $n$.
   Time stamps are natural numbers and each time stamp position $j$ corresponds to the thread
   with identifier $j$.

 We write $\accVC{V}{j}$ to access the time stamp at position $j$.
 We write $\incC{V}{j}$ as a short-hand for incrementing the vector clock $V$ at position $j$ by one.
\end{definition}

For each thread $i$ we maintain a vector clock $\threadVC{i}$.
For each mutex $y$, we find vector clock $\lockVC{y}$ to maintain the last release event on $y$.
Initially, for each vector clock $\threadVC{i}$
all time stamps are set to 0 but position $i$ where the time stamp is set to 1.
For $\lockVC{y}$ all time stamps are set to~0.

To efficiently compare read and write events, we make use of epochs.
\begin{definition}[Epoch]
  Let $j$ be a thread id and $k$ be a time stamp.
  Then, we write $\thread{j}{k}$ to denote an \emph{epoch}.
\end{definition}
Each event can be uniquely associated to an epoch.
Take its vector clock and extract the time stamp $k$ for the thread $j$
the event belongs to. For each event this pair of information
represents a unique key to locate the event.

Epochs are use to manage unsynchronized reads and writes.
Like in the original FastTrack algorithm, we find
a set $cr$ of unsynchronized reads where
each read is represented by its epoch $\thread{j}{k}$.
In addition, we also maintain a set $cw$ of unsynchronized writes.
This set serves the purpose to identify all HB write-write and write-read race pairs
and to build up the edges for WRD candidates.
Race pairs are recorded in $races$ and edges in the set $edges$.

We consider the various cases.
In case of an acquire event we synchronize the thread's vector clock with the most recent release event
by building the union of the vector clocks $\threadVC{i}$ and $\lockVC{x}$. In case of a release event, we update $\threadVC{x}$.

Each write event compares the epochs in $cw$ against the thread's vector clock.
The condition $k > \accVC{\threadVC{i}}{j}$ (line 3) indicates that the write is concurrent to the current epoch (event) in $cw$
and therefore a new write-write race pair is added to $races$. All data race pairs are represented via their epochs.
We update the set $cw$ (line 7) and only keep writes in $cw$ that are concurrent.
For each concurrent read in $cr$ we add a write-read race pair (line 11).
For each WRD candidate of read, we add a new edge (line 10). This step corresponds to the construction in Definition~\ref{def:graph-edges}).

By updating $cw$ (line 7) we include an optimization.
We only maintain WRD candidates that are unsynchronized with respect to each other.
We know candidates in $W_1$ are unsynchronized with respect to each other.
The same applies to $W_2$.
However, it is possible that
$f \in W_1$ and $g \in W_2$ where $\hb{g}{f}$.
Due to space limitations, we refer to Appendix~\ref{sec:auxiliary} for an example.
Event $g$ can be safely omitted, because any WRD relation introduced by $g$
is covered by $f$.

The treatment is similar for read events.
We assume that for events appearing in the same thread,
we add an edge to the set $edges$ if they appear one after the other (in that thread).
For brevity, we omit the details.

\begin{example}
    We consider a run of the \SSHB{} algorithm by processing the following trace. We underline events for which a new race pair is detected. The subscript is the vector clock for each event. The set $\{\thread{1}{1}\}_x$ in column $cw$, depicts the set $cw$ for variable $x$. Similarly for column $cr$, $\{\thread{2}{1}\}_x$ depicts the set $cr$ for variable $x$.
    
    \begin{tabular}{ll}
      \hspace{-.5cm}
    \begin{minipage}{0.7\linewidth}
        \bda{ll|l|l|ll}
        & \thread{1}{} & \thread{2}{} & \thread{3}{} & cw & cr\\ \hline
        1. & \writeE{x}_{[1,0,0]} & \textbf{} & \textbf{} & \{\thread{1}{1}\}_x\\
        2. & &              \underline{\readE{x}}_{[0,1,0]}  & \textbf{} & \textbf{} & \{\thread{2}{1}\}_x\\
        3. & & &                           \writeE{y}_{[0,0,1]} & \{\thread{3}{1}\}_y\\
        4. & & &                           \underline{\writeE{x}}_{[0,0,2]} & \{\thread{1}{1}, \thread{3}{2}\}_x\\
        5. & &              \underline{\writeE{y}}_{[0,2,0]} & \textbf{} & \{\thread{3}{1}, \thread{2}{2}\}_y
        \eda
    \end{minipage}
    &
          \hspace{-.5cm}
    \begin{minipage}{0.3\linewidth}
        \begin{tikzpicture}[->,>=stealth',shorten >=1pt,auto,node distance=2cm,semithick]
            \node[draw=black] (A) {$\thread{1}{\writeE{x}}$};
            \node[draw=black] (B) [below right of=A] {$\thread{2}{\readE{x}}$};
            \node[draw=black] (C) [below right of=B] {$\thread{3}{\writeE{y}}$};
            \node[draw=black] (D) [below of=C, yshift=1cm] {$\thread{3}{\writeE{x}}$};
            \node[draw=black] (E) [below left of=D] {$\thread{2}{\writeE{y}}$};
            
            \path (A) edge node {1} (B);
            \path (D.west) edge [bend left=90] node {3} (B.west);
            \path (B) edge node {4} (E);
            \path (C) edge node {2} (D);
        \end{tikzpicture}
    \end{minipage}

\end{tabular}

The read in the second step is concurrent to the write in the first step. This is detected by comparing the epochs in $cw_x$ with the vector clock of thread two. Since the read is concurrent to the write, a new edge is added in the graph from the write to the read event (edge 1).

The same applies to the write in the fourth step. It is concurrent to the read in step 2, thus a new data race pair is created and a edge is added in the graph (edge 3).

At step five we create the next race pair between the events $\thread{2}{\writeE{y}}$ and $\thread{3}{\writeE{y}}$. In the first phase all write-read dependencies are ignored. Thus the two writes on $y$ are concurrent. This race pair is a false positive since a write-read dependency exists under which the events are ordered.

In the post processing phase the detected data race pairs $(\thread{1}{\writeE{x}},\thread{2}{\readE{x}})$, $(\thread{2}{\readE{x}}, \thread{3}{\writeE{x}})$ and $(\thread{3}{\writeE{y}}, \thread{2}{\writeE{y}})$ are tested. 

For the first two data race pairs we omit the trivial path that exists due to the write-read dependencies in the implementation. We do not find an alternative path between the events without the direct write-read dependency edge and thus report a guaranteed data race for both data race pairs.

For the data race pair $(\thread{3}{\writeE{y}}, \thread{2}{\writeE{y}})$ we find a path from event $\thread{3}{\writeE{y}}$ to $\thread{2}{\writeE{y}}$ by following edge two, three and four. Thus the race pair is categorized as maybe.  
    
\end{example}

We consider the space and time complexity of running Algorithm~\ref{alg:vc-race-pairs}.
We assume $n$ to be the size of the trace and $k$ to be the number of threads.
The size of $cr$ and $cw$ is bound by $O(k)$ and
the size of $races$ and $edges$ is bound by $O(n*n)$.
We assume the addition and removal of elements to any of the sets takes time $O(1)$.
For write, there are $O(k)$ elements to consider. The same applies to read.
The update of a vector clock takes time $O(k)$.
So, processing of each event takes time $O(k)$.
Hence, the running time is $O(n*k)$.

The diagnostic analysis phase then checks for each write-write data race pair $(\thread{i}{k}, \thread{j}{k'}) \in races$
if there is path among $\thread{i}{k}$ and $\thread{j}{k'}$
in the graph derived from $edges$. 
No path guarantees the race (see Proposition~\ref{prop:guaranteed-data-race} )
whereas some path means `maybe' (see Proposition~\ref{prop:maybe-data-race}  ).
The time complexity is $O(n*n + n) = O(n*n)$ to check if a path exists.
There are $O(n*n)$ race pairs. So, overall the time to check if races are guaranteed or maybe races
is $O(n^4)$. In practice, we experienced no performance issues as the worst case never arises.

The implementation can detect write-read and read-write data race pairs.
A read and write event are in a guaranteed data race
if there is no path between the events when omitting the write-read dependency edge that connects them. 

\begin{minipage}{\linewidth}
    \small 
    
    \begin{tabular}{cccccccc}
        \hline
        \multicolumn{1}{|l|}{\textbf{Program}} & \multicolumn{1}{l|}{\textbf{Events}} & \multicolumn{1}{l|}{\textbf{Threads}} & \multicolumn{1}{l|}{\textbf{Vars}} & \multicolumn{1}{l|}{\textbf{Locks}} & \multicolumn{1}{l|}{\textbf{Reads}} & \multicolumn{1}{l|}{\textbf{Writes}} & \multicolumn{1}{l|}{\textbf{Syncs}} \\ \hline
        moldyn & 53308289 & 4 & 18423 & 1 & 45153928 & 8154330 & 11\\
        raytracer & 224598 & 4 & 5 & 5 & 224599 & 1 & 26\\
        xalan & 62886984 & 17 & 9707 & 974 & 5268214 & 1210737 & 446450\\
        lusearch & 2659371 & 17 & 124 & 772 & 1132 & 271  & 1328968\\
        tomcat & 26450441 & 58 & 42949 & 20136 & 10432626 & 383212 & 404693 \\
        avrora & 16671631 & 7 & 479049 & 7 & 11847735 & 1868321 & 1477776\\
        h2 & 360617324 & 18 & 749954 & 48 & 95939995 & 698490 & 1680748\\
    \end{tabular}
    \captionof{table}{Benchmarks and some meta data}
    \label{tab:bench:disc}
\end{minipage}

\subsection{Experimental Results}

For benchmarking we use two Intel Xeon E5-2650 and 64 gb of RAM with Ubuntu 18.04 as operating system.
We use tests from the Java Grande Forum \cite{smith2001parallel} and from the DaCapo (version 9.12, \cite{Blackburn:2006:DBJ:1167473.1167488}) benchmark suite. All benchmark programs are written in Java and use up to 58 threads. Details for each benchmark can be found in Table \ref{tab:bench:disc}.
Column Events contains the length of the trace. Columns Threads, Vars and Lock describe the total number of threads, variables and mutex
that were used in the observed schedule. Columns Reads, Writes and Syncs describe the kind of events that are stored in the trace (except fork/join events).

We compare our algorithm SSHB and its diagnostic (post-processing) phase
against variants of FastTrack and SHB.
FastTrack and SHB only report code locations that are in a race.
Hence, we consider some variants of FastTrack, referred to as \HBPartner{},
and SHB, referred to as \SHBPartner{}, that report pairs of events that
are in a race. We ignore pairs that refer to the same source code locations.

\begin{minipage}{\linewidth}
    \small
    \begin{tabular}{c|c|c|c|c|c|c|c}
        \hline
        \textbf  & moldyn  & raytracer & xalan & lusearch & tomcat & avrora & h2  \\ \hline
        
        \textbf{\HBPartner} (s) & 97 & 0 & 94 & 3 & 63 & 38 & 81 \\
        \#Races  & 45 &  1 & 45 & 23 & 1322 & 36 & 480\\
        \#Read-write & 19 & 0 & 11 & 1 & 303 & 6 & 95\\
        \#Write-read & 16 & 1 & 22 & 22 & 685 & 23 & 205\\
        \#Write-write& 10 & 0 & 12 & 0 & 334 & 7 & 180\\
        \hline
        
        \textbf{\SHBPartner} (s) & 104 & 0 & 95 & 3 & 67 & 38 & 83 \\
        \#Races & 18 &  1 & 43 & 22 & 662 & 24 & 208 \\ 
        \#Read-write & 6 & 0 & 11 & 1 & 236 & 6 & 73  \\
        \#Write-read & 12 & 1 & 20 & 21 & 221 & 18 & 123\\
        \#Write-write & 0 & 0 & 12 & 0 & 205 & 0 & 12\\
        \hline
        
        \textbf{\SSHB} (s) & 249 & 0 & 100 & 9 & 423 & 52 & 224\\
        Phase1+2 (s) & 69+70 & 0+0 & 12+3 & 3+2 & 32+355 & 28+7 & 46+124\\        
        \CountRaces/G & 45/6 &  1/1  & 45/34  & 23/22  & 1322/194  & 36/20 & 480/120\\
        \#Read-write/G & 19/6 & 0/0 & 11/10 & 1/1 & 303/91 & 6/4 & 95/37\\
        \#Write-read/G & 16/0 & 1/1 & 22/13 & 22/21 & 685/51 & 23/16 & 205/78\\
        \#Write-write/G & 10/0 & 0/0 & 12/11 & 0/0 & 334/52 & 7/0 & 180/5\\
        \#w(r)(Avg) & 1.72 & 1.00 &  1.00 & 1.21 & 1.05 & 1.65 & 1.06\\
        \#w(r)(Max) & 6683 & 1 & 12 & 2 & 190 & 6 & 19\\
        \hline
    \end{tabular}
    \captionof{table}{
          Benchmark results}
    \label{tab:bench:races}
\end{minipage}

For \HBPartner{} and \SHBPartner{} we report the overall processing time in seconds (s)
which includes parsing the trace, running the algorithms etc.
The number 0 means that the time measured is below 1 second.
We make use of
an extension of RoadRunner for creating a log file of all trace events.
Our algorithms are implemented in Go, therefore, we need to apply some transformations
that creates some extra overhead. In the first row, labeled with the algorithm name, we present the time in seconds of the complete run including parsing the trace, race prediction and reporting. The row labeled \#Races shows the amount of predicted data race pairs. We only report and count unique code location pairs.
Race pairs are additionally categorized in read-write,
write-read and write-write data races. 
The row \#Read-write represents the case that the read appeared before the write in the trace,
whereas \#Write-read represents the case that the write appeared first.

For SSHB we separately measure the time to run Algorithm~\ref{alg:vc-race-pairs} (first phase)
and to check for guaranteed/maybe data races in the (second) diagnostic phase.
See row labeled Phase1 + 2 (s).
We present the number of races reported in the first phase and the number of guaranteed races
obtained via the (second) diagnostic phase.
For example, for the moldyn benchmark we find in row \CountRaces/G
the entry 45/6 which indicates 45 race pairs in the first phase of which are 6 guaranteed races.
The leading number, here 45, always coincides with the number reported by \HBPartner{}.
We also report the average and maximum size of the set of WRD candidates
via \#w(r)(Avg) and \#w(r)(Max).
Table \ref{tab:bench:races} shows the result for each benchmark.

Guaranteed data races are likely to be the more `critical' races and are not influenced by inaccurate tracing.
Consider the tomcat benchmark.
\HBPartner{} reports 1322 data races whereas \SHBPartner{} still reports 662.
Recall that \SHBPartner{} imposes a fixed write-read dependency based on the order of writes/reads in the trace.
\SSHB{} reports 194 guaranteed races by considering all possible WRD candidates due to inaccurate tracing.
This is still a high number but it becomes more feasible for the user
to trace down 194 races than 1322 or 662.

Based on the average number of WRD candidates, see row labeled $\#w(r)(Avg)$,
we argue that imposing a fixed write-read dependency as done by \SHBPartner{} may lead to inaccuracies.
For example, for benchmark moldyn we find 1.72 WRD candidates and for benchmark avrora we find 1.65 WRD candidates on the average.
This is another indication to focus on guaranteed races rather than any race.

For the benchmark moldyn we find a maximum of 6683 WRD candidates, see row labeled \#w(r)(Max).
The reason for this high number is that the associated read event has multiple data races with write events that result from the same code location.
Recall that we only count the number of race pairs with distinct code locations.
However, our graph captures all events regardless if they refer to the same code location.
Hence, the significant difference between \#w(r)(Max) and the number of races reported.

The traces for our benchmark programs are rather long.
So, we did not inspect the traces to check for inaccuracies regarding unsynchronized reads and writes nor did we check if maybe races are false positives.
For example, consider the avrora benchmark.
\HBPartner{} reports 36 races whereas \SHBPartner{} reports 24 races.
Our method \SSHB{} reports that out of the 36 (24) there are 20 guaranteed races.
To check that the 16 \HBPartner{} races and 4 \SHBPartner{} races are false positives,
would require manual inspection of the specific program run as well as detailed knowledge
of the program text.

To summarize.
We observe that across all benchmarks there is a difference
between the number of races and the number of guaranteed races.
Furthermore, the max number of WRD candidates, see \#w(r)(Max), is besides the raytracer and lusearch
benchmark fairly high. Hence, it is likely that some traces are inaccurate.
Our method identifies guaranteed races and thus
allows the user to focus on `critical' races.

\subsection{Impact of RoadRunner on Results}
\label{sec:roadrunner-results}

In our experiments, we use the RoadRunner version from \url{https://github.com/stephenfreund/RoadRunner}.
For this version, acquire and release events are issued \emph{after} the respective operation has executed. Then, guarantee {\bf INST\_RA}, see Section~\ref{sec:run-time-events}, no longer holds
and we may encounter false positives among
the guaranteed races reported by \SSHB{}.

Recall the example in Figure~\ref{fig:example-lock-false-positive}.
The release event is only issued after the actual operation.
There is some subsequent acquire operation that takes place in some concurrent thread.
As the resulting acquire event is issued concurrently to the release event,
the acquire event might appear before release in the trace.
In our example, $\lockE{y}$ appears at trace position~$3$ and $\unlockE{y}$ appears at trace position~$5$.
When deriving the happens-before relation based on the inaccurate trace,
the two write events at trace position~$2$ and~$4$ are unordered.
Hence, we predict a write-write data race. This is wrong as both writes are mutually exclusive to each other. Hence, among the guaranteed races reported by \SSHB{},
there may be still some false positives due to inaccurate release-acquire dependencies.

Adapting RoadRunner by following the instrumentation scheme outlined
in Definition~\ref{def:instrumentation} would fix
the issue of inaccurate release-acquire dependencies.
However, this turned out to be difficult and therefore we decided for another option.
We use the original RoadRunner version.
For each guaranteed data race report, we apply a check to filter out false positives
due to inaccurate release-acquire dependencies.
This check is rather simple and is carried out via
the lockset algorithm~\cite{Dinning:1991:DAA:127695:122767}.

The lockset algorithm tries to determine if a variable is always protected by the same mutex.
If it cannot determine at least a single mutex that is always used,
it reports a potential data race for this variable and its accesses.
For this, every thread maintains a set of mutex that it currently owns.
For each acquire event the mutex is added to the lockset of the thread.
For every release event the mutex is removed from the set.

The important insight is that the lockset algorithm
is  is not affected by inaccurate release-acquire dependencies.
This is so because for computing the lockset in each thread,
we only care about inter-thread acquire-release dependencies.
To check if guaranteed races are affected by inaccurate release-acquire dependencies,
we run the lockset algorithm simultaneously with our \SSHB{} algorithm.
We flag a guaranteed data race reported by \SSHB{} as a false positives,
if the intersection of the locksets of the events involved in the race is non-empty.

\begin{figure}
    \bda{ll|l|ccc}
   & \thread{1}{}
   & \thread{2}{}
   & Lockset(1)
   & Lockset(2)
   & Lockset(x) 
  \\ \hline
  1. & \lockE{y} & & \{y\}
  \\ 2. & \writeE{x} & & & & \{y\}
  \\ 3. & & \lockE{y} & & \{y\}
  \\ 4. & & \writeE{x} & & & \{y\}
  \\ 5. & \unlockE{y} & & \emptyset
  \\ 6. & & \unlockE{y} & &  \emptyset
  \eda
  \caption{Inaccurate Trace}
  \label{fig:inaccurate-trace-lockset-ex}
\end{figure}

In Figure~\ref{fig:inaccurate-trace-lockset-ex}
we revisit the earlier example that
explains the problem of inaccurate release-acquire dependencies (Figure~\ref{fig:example-lock-false-positive}).
For this example we find that both threads own mutex $y$ when they access variable $x$.
Hence, the intersection of the locksets of both writes is non-empty.
Hence, lockset does not report a data race as a common mutex for all accesses on $x$ exists.
On the other hand, \SSHB{} reports a guaranteed data race that involves both writes.
Based on the information provided by the lockset algorithm,
we identify this guaranteed race as a false positive.

\begin{table}
    \begin{tabular}{c|c|c|c|c|c|c|c}
        \hline
        \textbf  & moldyn  & raytracer & xalan & lusearch & tomcat & avrora & h2  \\ \hline
        \SSHB{}(FP) & 0 & 0 & 0 & 1 & 2 & 6 & 23
    \end{tabular}
    \caption{False positives due to inaccurate release-acquire dependencies.}
    \label{tab:rel-acq-inaccuary-fps}
\end{table}

Table \ref{tab:rel-acq-inaccuary-fps} shows how many  guaranteed data races
reported by \SSHB{} are identified as false positives due to inaccurate release-acquire dependencies.
Benchmarks lusearch, tomcat, avrora and h2 are affected by this tracing inaccuracy.
For example, in case of avrora
six out of 20 reported guaranteed data races have a non-empty lockset
and are therefore false positives.

\section{Related Works}
\label{sec:related-works}

\subsection{Instrumentation and Tracing}

The RoadRunner tool by Flanagan and Freund~\cite{flanagan2010roadrunner}
is used by numerous data race prediction methods.
RoadRunner relies on the ASM code manipulation tool
by Bruneton, Lenglet and Coupaye~\cite{bruneton2002asm}.
As discussed in Section~\ref{sec:roadrunner-results},
acquire and release events are issued \emph{after} the respective operation has executed.
Then, tracing of release-acquire dependencies may be inaccurate
and false positives as described by the example in Figure~\ref{fig:example-lock-false-positive}
may arise.
A further source of inaccurate tracing are unsynchronized reads/writes.
There is no guarantee that their order as found in the trace is accurate
and corresponds to an actual program run.

Marek, Zheng, Ansaloni, Sarimbekov, T{\r{u}}m and Qi~\cite{10.1007/978-3-642-35182-2_18}
present the DiSL framework for instrumentation of Java bytecode.
The objectives of DiSL are similar to RoadRunner.
As we have learnt~\cite{andrea-rosa-at-mpl19}, DiSL could be adapted
to accurately capture release-acquire dependencies. This would require two instrumentation runs.
The first run instruments all instructions that need a post-instrumentation
including the acquire instructions. Release instructions are instrumented separately
in the second run where the instrumentation needs to be added before
the release instruction to capture release-acquire dependencies correctly.
To our knowledge, accurate tracing of write-read dependencies is beyond
the capabilities of DiSL.

Kalhauge and Palsberg~\cite{kalhauge2018sound}
implement their own instrumentation based on ASM instead of using RoadRunner.
This is necessary as for the purpose of deadlock prediction
they need to trace acquire operations that may be potentially being executed.
Recall that RoadRunner only issues a trace event after acquire executes.

RVPredict~\cite{rvpredict} is a commercial tool suite
that supports dynamic data race prediction and comes with
its own tracing tool. We do not know if (and how) the issue
of inaccurate tracing is addressed by RVPredict.

Cao, Zhang, Sengupta, Biswas and Bond~\cite{Cao:2017:HRD:3134419.3108138}
study the issue of efficient tracking of dependencies in a parallel run-time setting.
They rely on a managed run-time, the Jikes R(esearch)VM~\cite{javajikes}, and
employ thread-local traces to reduce the online tracking overhead.
Their method appears to capture accurately write-read dependencies
but this is not further explored as for data race prediction
they only consider FastTrack which ignores write-read dependencies.



Hardware-based tracing \cite{DBLP:conf/icse/ShengVEHCZ11} is more accurate and incurs
a much lower run-time overhead compared to the software-based instrumentation schemes
mentioned above. However, as there is only a limited amount of debug registers available to store events,
the program behavior cannot be recorded continuously. Hence, we may miss events and this
means the analysis is less accurate.

Sheng, Vachharajani, Eranian, Hundt, Chen and Zheng~\cite{DBLP:conf/icse/ShengVEHCZ11}
introduce a tracing framework where two tracers are used together.
The first tracer is the hardware tracer to record memory accesses.
The second tracer is a standard software instrumentation to record acquire/release operations
This leads to inaccuracies (release-acquire dependencies) in the trace
but is compensated by lockset algorithm.
Both traces need to be merged for further analysis. The merged trace contains potential
inaccuracies as the exact timing among acquire/release and write/read operations
may be not exact.

\subsection{Happens-Before Based Data Race Prediction}

Mathur, Kini and Viswanathan~\cite{Mathur:2018:HFR:3288538:3276515}
observe that standard happens-before based methods such as FastTrack~\cite{flanagan2010fasttrack} are only sound up to the first race predicted.
Subsequent races reported may be inaccurate (false positives).
Mathur et.~al. propose SHB, a happens-before based data race predictor
that takes into account write-read dependencies and thus avoids
false positives reported by FastTrack.
The accuracy of SHB relies on the accuracy of traces.
Recall the examples from Section~\ref{sec:overview}.

The happens-before based data race prediction methods
by Huang, Meredith and Rosu~\cite{Huang:2014:MSP:2666356.2594315}
and Kalhauge and Palsberg~\cite{kalhauge2018sound}
additionally trace the values read and written to variables
and include control flow events like if-conditions and while-loops (branch-event).
The motivation is to explore more feasible trace reorderings.
Inaccurate results due to inaccurate trace may arise.
Due to space limitations, we refer to Appendix~\ref{sec:value-tracing}.

Biswas, Cao, Zhang, Bond and Wood~\cite{biswas2017lightweight}
use sampling to reduce the run-time overhead.
There is clearly a loss of precision (accuracy) which they are willing
to trade in for improved run-time performance.

Huang and Rajagopalan~\cite{Huang:2016:PMR:3022671.2984024} study the issue of incomplete trace in the data race setting.
They propose a refined analysis method based on an SMT-solver to reduce the amount of false positives due to incomplete trace information.
Programmer annotations are necessary to guide the solver in case of external functions that manipulate shared memory.

Wood, Cao, Bond and Grossman~\cite{DBLP:journals/pacmpl/WoodCBG17}
consider how to reduce the synchronization overhead of online race predictors
such as FastTrack.
Wilcox, Flanagan and Freund~\cite{Wilcox:2018:VEF:3200691.3178514}
give a redesign of the FastTrack method to ensure that
the resulting implementation is free of concurrency bugs.
Both works do not address the issue of inaccurate tracing.

There are numerous other works
that consider trace-based analysis methods to predict concurrency bugs.
For example, consider~\cite{Wang:2011:PCF:2341616.2341619,Said:2011:GDR:1986308.1986334,Wang:2009:SPA:1693345.1693367}.
To the best of our knowledge, the common assumption
is that traces are accurate.


\section{Conclusion}
\label{sec:conclusion}

Happens-before based data race prediction methods rely on the accuracy of
the trace of recorded events to guarantee that races are accurate.
Unless some sophisticated and likely costly tracing method is used,
e.g.~consider~\cite{Cao:2017:HRD:3134419.3108138},
write-read dependencies recorded may not be accurate.
We offer diagnostic methods to check
if (1) a race is guaranteed to be correct regardless of 
any potential inaccuracies, (2) maybe incorrect due to inaccurate tracing.
Our experimental results show that our methods adds value to existing
data race predictors such as FastTrack and SHB.
In future work,
we plan to identify undetected races. Recall the example
in Figure~\ref{fig:example-wrd-false-negative} where a race may be unnoticed
due to inaccurate tracing.

\bibliography{main}

\newpage

\appendix

\emph{Optional material}

\section{Proofs}
\label{sec:proofs}

\subsection{Auxiliary Results}
\label{sec:auxiliary}

\begin{proposition}
  \label{prop:w1-unsync}
  Let $T$ be a trace and $e$ be some read event.
  Let $f, g \in W_1(e)$ be two distinct write events.
  Then, $f$ and $g$ are unsynchronized
  with respect to each other.
\end{proposition}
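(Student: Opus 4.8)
The plan is to argue by contradiction, exploiting the maximality side condition built into the definition of $W_1$. Write the read event as $e = \readEE{x}{j}$, and suppose $f, g \in W_1(e)$ are distinct yet \emph{not} unsynchronized. Then, by the definition of $\notHB{}{}$, either $\hb{f}{g}$ or $\hb{g}{f}$ holds; by symmetry it suffices to derive a contradiction in the case $\hb{f}{g}$.

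First I would unpack the two memberships. Since $g \in W_1(e)$, the defining clause yields $\notHB{e}{g}$, i.e.\ $g$ is a write on $x$ that is unsynchronized with the read $e$. Since $f \in W_1(e)$, writing $f = \writeEE{x}{i}$, its maximality side condition asserts that there is no write $\writeEE{x}{k} \neq f$ satisfying both $\notHB{e}{\writeEE{x}{k}}$ and $\hb{f}{\writeEE{x}{k}}$. The key step is then to instantiate this forbidden witness with $g$: because $f$ and $g$ are distinct, $g$ qualifies as a candidate $\writeEE{x}{k} \neq f$; moreover $\notHB{e}{g}$ holds (from $g \in W_1(e)$) and $\hb{f}{g}$ holds (our assumption). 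This is precisely the configuration that $f$'s side condition forbids, giving a contradiction. Hence $\neg\,\hb{f}{g}$, and the symmetric argument — swapping the roles of $f$ and $g$ and invoking the maximality condition for $g$ — yields $\neg\,\hb{g}{f}$. Together these establish $\notHB{f}{g}$, as required.

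I do not expect any real obstacle here: the result is essentially immediate from the maximality clause, which was designed precisely to rule out one $W_1$-element happening before another write that is still concurrent with the read. The only point needing care is the bookkeeping that $g$ is genuinely an eligible witness for the existential in $f$'s side condition — namely that $g$ is a write on the same variable $x$ (built into the notation $\writeEE{x}{\cdot}$ used throughout Definition~\ref{def:wrd-candidates}) and that $g \neq f$ (given by hypothesis). Once these checks are discharged, the contradiction is direct and the symmetric case requires no new idea.
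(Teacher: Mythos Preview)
Your argument is correct and is exactly the approach the paper takes: the paper's one-line proof (``by definition, there is no other write in between'') is precisely the maximality side condition of $W_1$ that you unpack in detail. Your version is simply a more careful spelling-out of the same contradiction, so there is nothing to add.
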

\begin{proof}
  Suppose $f, g \in W_1(e)$.
  By definition, there is no other write in between.
  Hence, $f$ and $g$ are unsynchronized with respect to each other.
  \qed
\end{proof}  

\begin{proposition}
    \label{prop:w2-unsync}
  Let $T$ be a trace and $e$ be some read event.
  Let $f, g \in W_2(e)$ be two distinct write events.
  Then, $f$ and $g$ are unsynchronized
  with respect to each other.
\end{proposition}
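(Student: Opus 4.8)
The plan is to argue by contradiction, exploiting the fact that the side condition defining $W_2(e)$ is exactly a closeness (maximality) requirement among the writes that happen-before the read $e = \readEE{x}{j}$. Concretely, I would suppose that the two distinct candidates $f, g \in W_2(e)$ are synchronized, so that either $\hb{f}{g}$ or $\hb{g}{f}$ holds, and in each case derive a violation of the defining condition of $W_2$.

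First I would unfold the membership. Since $f, g \in W_2(e)$, both are write events on $x$ with $\hb{f}{e}$ and $\hb{g}{e}$, and each satisfies its side condition: for $f$ there is no write $w \neq f$ on $x$ with $\hb{w}{e}$ and $\hb{f}{w}$, and symmetrically for $g$. Now consider the case $\hb{f}{g}$. Here $g$ is a write on $x$, distinct from $f$, with $\hb{g}{e}$ (because $g \in W_2(e)$) and $\hb{f}{g}$; but this is precisely a witness making the forbidden existential in $f$'s side condition true, contradicting $f \in W_2(e)$. The case $\hb{g}{f}$ is entirely symmetric: $f$ then witnesses the existential in $g$'s side condition, contradicting $g \in W_2(e)$. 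Since both orderings are impossible, neither $\hb{f}{g}$ nor $\hb{g}{f}$ holds, i.e.\ $\notHB{f}{g}$, which is the claim.

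I do not expect a genuine obstacle here; the argument is a direct application of the maximality side condition and mirrors the structure of the proof of Proposition~\ref{prop:w1-unsync}. The only point requiring a little care is to confirm that $g$ (resp.\ $f$) really qualifies as the witnessing write in the side condition. This needs that it is a write on the same variable $x$ (guaranteed, since $W_2(e)$ contains only writes on the read variable) and that it is distinct from the candidate whose condition it violates (given, as $f \neq g$). With these routine checks in place the contradiction is immediate, so the proof should be short.
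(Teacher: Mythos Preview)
Your proof is correct and follows essentially the same approach as the paper: both use the ``no write in between'' side condition of $W_2$ to rule out any happens-before ordering among two distinct candidates, with your version simply unpacking the contradiction explicitly where the paper leaves it implicit.
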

\begin{proof}
  Suppose $f, g \in W_2(e)$.
  By definition, there is no other write in between.
  Hence, $f$ and $g$ are unsynchronized with respect to each other.
  \qed
\end{proof}  

Consider $f \in W_1(e)$ and $g \in W_2(e)$.
Assuming $\hb{f}{g}$ immediately leads to a contradiction.
On the other hand, $\hb{g}{f}$ is possible as shown by
the following example.

\begin{example}
  Consider the following trace.
  \bda{ll|l|l}
  & \thread{1}{} & \thread{2}{} & \thread{3}{}
  \\ \hline
  1. & \lockE{y_1} &&
  \\
  2. & \lockE{y_2} &&
  \\
  3. & \writeE{x} &&
  \\
  4. & \unlockE{y_2} &&
  \\
  5. & \unlockE{y_1} &&
  \\
  6. & & & \lockE{y_2}
  \\
  7. & & & \writeE{x}
  \\
  8. & & & \unlockE{y_2}
  \\
  9. & & \lockE{y_1} &
  \\
  10. & & \readE{x} &
  \\
  11. & & \unlockE{y_1} &
  \eda
  Consider the read event $\readEE{x}{10}$.
  We find $W_1 = \{ \writeEE{x}{7} \}$
  and $W_2 = \{ \writeEE{x}{3} \}$
  where $\hb{\writeEE{x}{3}}{\writeEE{x}{7}}$.
\end{example}

\subsection{Proof of Proposition~\ref{prop:prop}}

\begin{proof}
  If the last write event $w$ as characterized by Definition~\ref{def:schedulable-happens-before}
  is unsynchronized w.r.t.~the read $e$, then
  $w \in W_1(e)$.
  Otherwise, $w$ and $e$ are in happens-before relation.
  That is, $\hb{w}{e}$.
  By the conditions in Definition~\ref{def:schedulable-happens-before}
  there is no other write (on the same variable) in between.
  Hence, $w \in W_2(e)$.
  \qed
\end{proof}

\subsection{Proof of Proposition~\ref{prop:prop-one}}

\begin{proof}
  For each of the writes in one of the sets $W_1(e)$
  and $W_2(e)$ applies the condition that
  there is no other write with the same property (happening) in between.
  Therefore, writes in $W_1(e)$ are unsynchronized
  with respect to the other.
  The same applies to $W_2(e)$.
  See Propositions~\ref{prop:w1-unsync} and~\ref{prop:w2-unsync}.
  Hence, there cannot be more than $O(k)$ candidates to consider.
  
  For both sets the worst-case number of candidates may arise
  as shown by the following example.

   \bda{ll|l|l|l|l|l}
  & \thread{1}{} & \thread{2}{} & \thread{3}{} & \thread{4}{} & \thread{5}{}
  \\ \hline
  1. & && &&  \writeE{x}
  \\
  2. & && & \writeE{x} &
  \\
  3. && & \lockE{y_1} &&
  \\
  4. && & \writeE{x} &&
  \\
  5. && & \unlockE{y_1} &&
  \\
  6. &&  \lockE{y_2} && &&
  \\
  7. &&  \writeE{x} && &&
  \\
  8. &&  \unlockE{y_2} && &&
  \\
  9. & \lockE{y_1} & && &&
  \\
  10. & \lockE{y_2} & && &&
  \\
  11. & \unlockE{y_2} & && &&
  \\
  12. & \unlockE{y_1} & && &&
  \\
  13. & \readE{x} & && &&
  \eda
  We have that $W_1(\readEE{x}{13}) = \{ \writeEE{x}{1}, \writeEE{x}{2} \}$
  and $W_2(\readEE{x}{13}) = \{ \writeEE{x}{4}, \writeEE{x}{7} \}$
  where $k=5$.
  It is easy to generalize the example for arbitrary $k$.
  \qed
\end{proof}

\subsection{Proof of Proposition~\ref{prop:prop-two}}

\begin{proof}
  We prove the statement by induction over the events in the trace.

  Consider the case of a read event $e$.
  The write event chosen $f$ by WRD in Definition~\ref{def:schedulable-happens-before}
  is clearly an element in $W(e)$.
  If $f \in W_2(e)$ we are immediately done.

  Consider the case that $f \in W_1(e)$.
  We need to show that by imposing $\shb{f}{e}$ we obtain a partial order.
  The only reason why this would not be the case is that there is
  some cyclic dependency. However, this is impossible as $\hb{}{}$
  and $\shb{}{}$ are built up by only considering events that
  appear earlier in the trace.
  That is, for two events $g_1$ and $g_2$ in $\hb{}{}$ and $\shb{}{}$ relation
  we conclude that $\pos{g_1} < \pos{g_2}$ and therefore
  cycles are impossible.
  \qed
\end{proof}

\subsection{Proof of Proposition~\ref{prop:someshb-hb-equivalence}}

\begin{proof}
  For each read event $e$, we pick a candidate $f \in W_2(e)$.
  By assumption such a candidate $f$ always exists and
  enjoys the property $\hb{e}{f}$. Hence,
  the relation $\someshb{f}{e}$ can already be derived via the HB relation.
  \qed
\end{proof}

\subsection{Proof of Proposition~\ref{prop:guaranteed-data-race}}

\begin{proof}
  Assume the contrary. Then, there would be a path. Contradiction.
  \qed
\end{proof}

\subsection{Proof of Proposition~\ref{prop:maybe-data-race}}

\begin{proof}
  We use the following notation.
  We will write $w, w'$ for write events and $r, r'$ for read events.

  We assume a path exists. Nodes along the path are distinct.
  Hence, he label $W(r)$ appears at most once and we have
  that for each set $W(r)$ of WRD candidates we consider a specific choice.
  
  We need to show that based on our choice of WRD candidates to built the path
  between $e$ and $f$, we can built a $\someshb{}{}$ relation that is a partial order.
  So, we need to consider the read events for which no WRD candidate has been chosen so far.
  We will show that there for the candidate-free read events $r$ we can choose
  a write candidate $w \in W_2(r)$. The resulting $\someshb{}{}$ relation
  is cycle-free and thus forms a proper partial order.

   We consider the path where w.l.o.g.~the path starts with $e$.
  \bda{c}
   e \rightarrow \underbrace{g_1 \rightarrow \dots \rightarrow g_n}_{\dots \rightarrow r' \rightarrow \dots w \rightarrow r \rightarrow \dots} \rightarrow f
  \eda
  We consider the situation that a specific candidate $w \in W(r)$ has been chosen to built this path.
  For the moment, we assume that $w \rightarrow r$ is the only non-HB edge (1).
  We further assume $w \in W_1(r)$.
  Otherwise, $w \in W_2(r)$. Then, Proposition~\ref{prop:someshb-hb-equivalence}
  applies and we are immediately done.
  Hence, $w \in W_1(r)$ which implies that $w$ and $r$ are not in a $\hb{}{}$ relation (2).
  
  We consider a candidate-free read element $r'$ where we pick $w' \in W_2(r')$.
  If there is no such $r'$ we are immediately done.
  Suppose that adding the edge $w' \rightarrow r'$ creates a cycle.
  This means that $w'$ must be part of the cycle. In particular, we conclude that
  \bda{c}
   w' \rightarrow r' \rightarrow \dots w \rightarrow r' \rightarrow w'
  \eda
  The case that $w'$ appears before $w$ immediately leads to a contradiction as
  we assume that $\hb{}{}$ is a partial order and therefore cycle-free.

  From above, we conclude that there is a HB path that contains a cycle.
  Recall our assumption~(1). That is, $w' \rightarrow \dots \rightarrow w'$.
  This is clearly a contradiction.
  Hence, adding the edge $w' \rightarrow r'$ does not create a cycle.

  We yet need to consider the situation that the candidate-free $r'$
  appears after $r$. As only HB edges are involved, we immediately
  conclude that adding the edge $w' \rightarrow r'$ does not create a cycle.

  What remains is to generalize assumption~(1).
  In case of multiple non-HB edges $w_i \rightarrow r_i$ we proceed by induction
  over the number of such cases by applying the above reasoning steps.
  \qed
\end{proof}

\section{FastTrack}

\begin{algorithm}
    \caption{Vector Clocks for Predicting FastTrack Races (FastTrack\ algorithm)}\label{alg:ft-race-checker}
    
    \begin{tabular}{ll}
        \begin{minipage}{.5\textwidth}  
            \begin{algorithmic}[1]
                \Procedure{acquire}{$i,x$}
                \State $\threadVC{i} = \supVC{\threadVC{i}}{\lockVC{x}}$
                \EndProcedure
            \end{algorithmic}
            
            \begin{algorithmic}[1]
                \Procedure{write}{$i,x$}
                \For {$\thread{j}{k} \in \readVC{x}$}
                \If {$k > \accVC{\threadVC{i}}{j}$}
                \State $races = races \cup \{\threadVC{i}\}$
                \EndIf
                \EndFor
                \If { $\thread{j}{k} = \lastWriteVC{x} \wedge k > \accVC{\threadVC{i}}{j} $}
                \State $races = races \cup \{\threadVC{i}\}$
                \EndIf
                \State $\lastWriteVC{x} = \accVC{\threadVC{i}}{i}$
                \State $\incC{\threadVC{i}}{i}$
                \EndProcedure
            \end{algorithmic}

        \end{minipage}
        
        &
        
        \begin{minipage}{.5\textwidth}  
            
            \begin{algorithmic}[1]
                \Procedure{release}{$i,x$}
                \State $\lockVC{x} = \threadVC{i}$  
                \State $\incC{\threadVC{i}}{i}$
                \EndProcedure
            \end{algorithmic}
            
            \begin{algorithmic}[1]
                \Procedure{read}{$i,x$}
                \If { $\thread{j}{k} = \lastWriteVC{x} \wedge k > \accVC{\threadVC{i}}{j} $}
                \State $races = races \cup \{\threadVC{i}\}$
                \EndIf
                \State $\readVC{x} =  \{\thread{i}{\accVC{\threadVC{i}}{i}}\} \cup\{\thread{j}{k} \mid \thread{j}{k} \in \readVC{x} \wedge k > \accVC{\threadVC{i}}{j}\}$ 
                \State $\updateVC{\readVC{x}}{i}{\accVC{\threadVC{i}}{i}}$
                \State $\incC{\threadVC{i}}{i}$
                \EndProcedure
            \end{algorithmic}
            
        \end{minipage}
        
    \end{tabular}
    
\end{algorithm}

 Algorithm \ref{alg:ft-race-checker} shows an overview of the FastTrack algorithm from \cite{flanagan2010fasttrack}.

For each thread $i$ we maintain a vector clock $\threadVC{i}$. For each variable $x$, we find a epoch $\lastWriteVC{x}$
to maintain the last write access to $x$. Similarly, we find $\readVC{x}$ to maintain the concurrent read accesses.
For each mutex $y$, we find vector clock $\lockVC{y}$ to maintain the last release event on $y$.

Initially, for each vector clock $\threadVC{i}$
all time stamps are set to 0 but position $i$ where the time stamp is set to 1.
For $\readVC{x}$, $\lastWriteVC{x}$ and $\lockVC{y}$ all time stamps are set to 0.
We write $\incC{V}{i}$ as a short-hand for incrementing the vector clock $V$ at position $i$ by one.

We consider the various cases of Algorithm \ref{alg:ft-race-checker}. For acquire and release events, parameter $i$ refers to the thread id and $x$ refers to the name of the mutex. Similarly, for writes and reads, $i$ refers to the thread id and x to the name of the variable.

In case of an acquire event we synchronize the thread's vector clock with the most recent release event by building the union of the vector clocks $\threadVC{i}$ and $\lockVC{x}$. In case of a release event, we update $\threadVC{x}$.

In case of a write event we compare the thread's vector clock against the read epochs in $\readVC{x}$ and the single write epoch stored in $\lastWriteVC{x}$ to check for a write-read and write-write race. Then we update $\lastWriteVC{x}$ to record the epoch of the most recent write on $x$. Last step is to increment the thread vector clock at position $i$ by one.

In case of a read event, we check for read-write races by comparing the thread's vector clock against the last write epoch. The next step is to update the set of read epochs to contain only the concurrent reads.

\section{\HBPartner{} and \SHBPartner{} Algorithms}

Variants of FastTrack and SHB
that report pairs of events that are in a race.
See Algorithms~\ref{alg:vcHB-race-pairs} and~\ref{alg:vcSHB-race-pairs}.

\begin{algorithm}
    \caption{Vector Clocks for Predicting \HBPartner\ Race-Pairs (\HBPartner\ algorithm)}\label{alg:vcHB-race-pairs}
    
    \begin{tabular}{ll}
        \begin{minipage}{.8\textwidth}
            \begin{algorithmic}[1]
                \Procedure{acquire}{$i,x$}
                \State $\threadVC{i} = \supVC{\threadVC{i}}{\lockVC{x}}$
                \EndProcedure
            \end{algorithmic}
            
            \begin{algorithmic}[1]
                \Procedure{write}{$i,x, k$}
                \For{$\thread{j}{k} \in cw(x)$}
                \If {$k > \accVC{\threadVC{i}}{j}$}
                \State $races = races \cup \{(\thread{j}{k}, \thread{i}{\accVC{\threadVC{i}}{i}})\}$
                \EndIf
                \EndFor
                \State $cw(x) = \{ \thread{i}{\accVC{\threadVC{i}}{i}} \}
                \cup \{ \thread{j}{k} \mid \thread{j}{k} \in cw(x) \wedge
                k > \accVC{\threadVC{i}}{j} \}$
                
                \For{$\thread{j}{k} \in cr(x)$}
                \If {$k > \accVC{\threadVC{i}}{j}$}
                \State $races = races \cup \{(\thread{j}{k}, \thread{i}{\accVC{\threadVC{i}}{i}})\}$
                \EndIf
                \EndFor
                \State $\incC{\threadVC{i}}{i}$
                \EndProcedure
            \end{algorithmic}

            \begin{algorithmic}[1]
                \Procedure{release}{$i,x$}
                \State $\lockVC{x} = \threadVC{i}$  
                \State $\incC{\threadVC{i}}{i}$
                \EndProcedure
            \end{algorithmic}
            
            \begin{algorithmic}[1]
                \Procedure{read}{$i,x, k$}
                \For{$\thread{j}{k} \in cw(x)$}
                \If {$k > \accVC{\threadVC{i}}{j}$}
                \State $races = races \cup \{(\thread{i}{\accVC{\threadVC{i}}{i}}, \thread{j}{k})\}$
                \EndIf
                \EndFor
                \State $cr(x) = \{ \thread{i}{\accVC{\threadVC{i}}{i}} \}
                \cup \{ \thread{j}{k} \mid \thread{j}{k} \in cr(x) \wedge
                k > \accVC{\threadVC{i}}{j} \}$               
                \State $\incC{\threadVC{i}}{i}$
                \EndProcedure
            \end{algorithmic}
            
        \end{minipage}
        
    \end{tabular}
    
\end{algorithm}

\begin{algorithm}
    \caption{Vector Clocks for Predicting \SHBPartner\ Race-Pairs (\SHBPartner\ algorithm)}\label{alg:vcSHB-race-pairs}
    
    \begin{tabular}{ll}
        \begin{minipage}{.8\textwidth}
            \begin{algorithmic}[1]
                \Procedure{acquire}{$i,x$}
                \State $\threadVC{i} = \supVC{\threadVC{i}}{\lockVC{x}}$
                \EndProcedure
            \end{algorithmic}
            
            \begin{algorithmic}[1]
                \Procedure{write}{$i,x, k$}
                \For{$\thread{j}{k} \in cw(x)$}
                \If {$k > \accVC{\threadVC{i}}{j}$}
                \State $races = races \cup \{(\thread{j}{k}, \thread{i}{\accVC{\threadVC{i}}{i}})\}$
                \EndIf
                \EndFor
                \State $cw(x) = \{ \thread{i}{\accVC{\threadVC{i}}{i}} \}
                \cup \{ \thread{j}{k} \mid \thread{j}{k} \in cw(x) \wedge
                k > \accVC{\threadVC{i}}{j} \}$
                
                \For{$\thread{j}{k} \in cr(x)$}
                \If {$k > \accVC{\threadVC{i}}{j}$}
                \State $races = races \cup \{(\thread{j}{k}, \thread{i}{\accVC{\threadVC{i}}{i}})\}$
                \EndIf
                \EndFor
                \State $\lastWriteVC{x} = \threadVC{i}$
                \State $\incC{\threadVC{i}}{i}$
                \EndProcedure
            \end{algorithmic}

            \begin{algorithmic}[1]
                \Procedure{release}{$i,x$}
                \State $\lockVC{x} = \threadVC{i}$  
                \State $\incC{\threadVC{i}}{i}$
                \EndProcedure
            \end{algorithmic}
            
            \begin{algorithmic}[1]
                \Procedure{read}{$i,x, k$}
                \For{$\thread{j}{k} \in cw(x)$}
                \If {$k > \accVC{\threadVC{i}}{j}$}
                \State $races = races \cup \{(\thread{i}{\accVC{\threadVC{i}}{i}}, \thread{j}{k})\}$
                \EndIf
                \EndFor
                \State $\threadVC{i} = \supVC{\threadVC{i}}{\lastWriteVC{x}}$
                \State $cr(x) = \{ \thread{i}{\accVC{\threadVC{i}}{i}} \}
                \cup \{ \thread{j}{k} \mid \thread{j}{k} \in cr(x) \wedge
                k > \accVC{\threadVC{i}}{j} \}$               
                \State $\incC{\threadVC{i}}{i}$
                \EndProcedure
            \end{algorithmic}
            
        \end{minipage}
        
    \end{tabular}
    
\end{algorithm}

\section{Missed data races due to unique source code locations}

Algorithms FastTrack, SHB and our variants ignore races that result from the same code location.
Race pairs connected to the same code locations may be affected by different write-read dependencies.
For example, consider conditional statements inside a loop.
As our diagnostic phase does not distinguish between such race pairs and only consider race pairs based on
their source location, we may miss some guaranteed data races as the following example shows.

For the following example code and trace the guaranteed data race on $x$ would be ignored. The first occurrence of the data race between the source code locations six in $Program \thread{1}{}$ and five in $Program \thread{2}{}$ is a false positive. The write read dependency on $y$ introduces a happens-before relation between the accesses on $x$.
The second occurrence is a guaranteed data race since it does not have a write read dependency on $y$. The write read dependency only occurs for the first loop execution where $z = 0$ applies. Repeated data races like the data race on $x$ are ignored in our benchmarks to reduce the amount of data races that need to be checked. Since the second occurrence is ignored, our algorithm will not report a data race for the writes on $x$. Testing all occurrences of data races would solve the problem but increases the time necessary to test the program.

\begin{minipage}{\linewidth}
    \begin{tabular}{l|l|l}
         Program $\thread{1}{}$ & Program $\thread{2}{}$ & Trace \\ \hline
         \begin{minipage}{.3\linewidth}
             \begin{algorithmic}[1]
                 \State z = 0
                 \For{}
                 \If {z == 0}
                 \State y = 1
                 \EndIf
                 \State x = 4
                 \State z = 1
                 \EndFor
             \end{algorithmic}
         \end{minipage}
     &
     \begin{minipage}{.3\linewidth}
         \begin{algorithmic}[1]
             \For{}
             \If {z == 0}
             \State tmp = y
             \EndIf
             \State x = 3
             \EndFor
         \end{algorithmic}
     \end{minipage}
    &
    \begin{minipage}{.3\linewidth}
        \begin{tabular}{ll|l}
            \textbf{} & $\thread{1}{}$ & $\thread{2}{}$ \\ \hline
            1. & $r(z)$ \\
            2. & $w(y)$ \\
            3. & $w(x)$ \\
            4. & & $r(z)$ \\
            5. && $r(y)$ \\
            6. && $w(x)$ \\
            7. &w(z) \\
            8. &$r(z)$ \\
            9. & $w(x)$ \\
            10. && $r(z)$ \\
            11. & $w(x)$ 
        \end{tabular}
    \end{minipage}
    \end{tabular}
\end{minipage}

\section{Value tracing}
\label{sec:value-tracing}

The works of \cite{Huang:2016:PMR:3022671.2984024} and \cite{Kalhauge:2018:SDP:3288538.3276516} trace the values that are read and written for each read and write event.
These extended traces potentially contain inaccuracies that influence the result of the race prediction.
Consider the following two examples.

\begin{example}
    
In the following example we find two possible write-read dependencies that can not be distinguished by the value written and read on $x$. A data race is reported if the write-read dependency between step three and four is assumed. For the write-read dependency between step two and four, the accesses on $y$ are not in a race.

\begin{figure}
    \bda{ll|l|l}
        &\thread{1}{}  & \thread{2}{} & \thread{3}{} \\ \hline
        1. & y = 42 & \textbf{} & \textbf{} \\
        2. & x = 5 & \textbf{} & \textbf{} \\
        3. & &        x = 5 \\
        4. & & &                 if (x == 5) \\
        5. & & &                 y = 43
    \eda
\end{figure}

\end{example}

\begin{example}
Figure \ref{fig:codeEx} shows how the code could be instrumented to obtain the values. We write $hash(x)$ for the evaluation of the value of $x$. The instrumentation introduces function calls ($trace_w$ and $trace_r$) that trace the write and read events and the associated values. 

The function calls and the write/read events are not atomic, thus arbitrary delays between the actual write/read and the call to the tracing function can occur. In the following example the call $trace_r(x, hash(x))$ is delayed. Because of this delay, another thread is able to update the value of $x$ between the actual read and the function call. 

\begin{figure}
    
    \bda{ll|l|l}
    & \thread{1}{}  & \thread{2}{} & \thread{3}{} \\ \hline
    1. & y = 42 & \textbf{} & \textbf{} \\
    2. & trace_w(y, hash(y)) & \textbf{} & \textbf{} \\
    3. & x = 5 & \textbf{} & \textbf{} \\
    4. & trace_w(x, hash(x)) & \textbf{} & \textbf{} \\
    5. & & &                 if (x == 5) \\
    6. & &        x = 6 \\
    7. & &        trace_w(x, hash(x))\\
    8. & & &                 trace_r(x, hash(x))\\
    9. & & &                 y = 43 \\
    10. & & &                 trace_w(y, hash(y))
    \eda
    \caption{Program execution with code.}
    \label{fig:codeEx}
\end{figure}   
    The delayed $trace_r(x hash(x))$ call leads to the trace in Figure \ref{fig:traceForCodeEx}. According to the trace, the $\thread{3}{\readE{x, 6})}$ event reads the value six which is written by thread two at step three. A race prediction algorithm has to assume the write-read dependency between step three and four due to the traced values. Thus, a data race between step one and five is predicted.
    
    In the actual execution, shown in Figure \ref{fig:codeEx}, the read event reads the value written by thread one. With the write-read dependency that occurred during the program execution, step one and five are not in a race.
    
    Because of this inaccurate trace a false positive for the two accesses on $y$ is reported. 

\begin{figure}
       
    \bda{ll|l|l}
    & \thread{1}{}  & \thread{2}{} & \thread{3}{} \\ \hline
    1. & \writeE{y, 42} & \textbf{} & \textbf{} \\
    2. & \writeE{x, 5} & \textbf{} & \textbf{} \\
    3. & &        \writeE{x,6} \\
    4. & &  &      \readE{x,6}\\
    5. & & &                 \writeE{y, 43} \\
    \eda
    \caption{Trace for the code execution shown in Figure \ref{fig:codeEx}}
    \label{fig:traceForCodeEx}
\end{figure}

This is not limited to false positives. Figure \ref{fig:codeEx2} is similar to the previous example. The only change is that the write on $y$ is performed by thread two in this example. 

Due to this change the data race on $y$ is not detected. A race prediction algorithm that only considers the write-read dependency, will always assume the write-read dependency between step six and three due to the value and discard the data race between step four and nine.

\begin{figure}
    
    \bda{ll|l|l}
    & \thread{1}{}  & \thread{2}{} & \thread{3}{} \\ \hline
    1. & x = 5 & \textbf{} & \textbf{} \\
    2. & trace_w(x, hash(x)) & \textbf{} & \textbf{} \\
    4. & & y = 42  & \textbf{} \\
    5. & & trace_w(y, hash(y))  & \textbf{} \\
    3. & & &                 if (x == 5) \\
    6. & &        x = 6 \\
    7. & &        trace_w(x, hash(x))\\
    8. & & &                 trace_r(x, hash(x))\\
    9. & & &                 y = 43 \\
    10. & & &                 trace_w(y, hash(y))
    \eda
    \caption{Program execution with code.}
    \label{fig:codeEx2}
\end{figure}
\end{example}

\section{Tracing Error With RoadRunner}
\label{sec:roadrunner-error}

We recorded a run of the Tomcat test and additionally enumerated every lock and access event with a global access number to record the order in which they occurred. In the trace in Figure \ref{fig:inaccurate-rel-acq-tomcat}, the entry `1429007:1,WR,2657, JkMain.java:339:237,1423414'  is the 1429007. trace event, performed by thread one and it is a write event on variable 2657 at code location `Threadpool.java:661:12'. The access has 1423414 as its global access number that describes the order in which every memory access occurred. 

In this example  thread 24 acquires mutex 633 (2773. lock operation). 
 It gets stuck trying to read variable 54437 as 
 thread 1 first needs to write on variable 2657 
 (global access number for $\thread{1}{\writeE{2657}} < \thread{24}{\readE{54437}}$. 
 Thread 1 tries to acquire mutex 633 that thread 24
 still owns and therefore is stuck too and cannot reach the write event on variable 2657 that is necessary for thread 24 to continue. Thus we have a cyclic dependency and are `stuck' during our trace replay.
 Ignoring the order in which the memory access occurred leads to inaccurate write-read dependencies.

\begin{figure}
\begin{tabular}{l}
 1428960:24,LK,633,nil,2773\\
 1429001:1,LK,633,nil,2776 STUCK\\
 1429004:1,UK,633,nil,0 \\
 1429007:1,WR,2657,JkMain.java:339:237,1423414 \\
 1429008:24,RD,54437,ThreadPool.java:661:12,1423415 STUCK  \\
 1429012:24,UK,633,nil,0
 \end{tabular}
 \caption{Inaccurate release-acquire dependencies in tomcat test.}
 \label{fig:inaccurate-rel-acq-tomcat}
\end{figure}

%
%
%
%
%
%
%
%

\end{document}